\newtheorem{theorem}{\textbf{Theorem}}
\newtheorem{corollary}{\textbf{Corollary}}
\newtheorem{lemma}{\textbf{Lemma}}
\newtheorem{remark}{\textbf{Remark}}
\newtheorem{definition}{\textbf{Definition}}
\newtheorem{assumption}{\textbf{Assumption}}
\newtheorem{proposition}{\textbf{Proposition}}
\newcommand{\Cselect}{\mathcal{C}_{\text{select}}}
\newcommand{\Cupdate}{\mathcal{C}_{\text{update}}}
\def\BibTeX{{\rm B\kern-.05em{\sc i\kern-.025em b}\kern-.08em
    T\kern-.1667em\lower.7ex\hbox{E}\kern-.125emX}}
\begin{document}

\title{Online Coreset Selection for Learning Dynamic Systems}

\author{
    Jingyuan Li, 
    Dawei Shi, 
    and Ling Shi
\thanks{J. Li and D. Shi are with the MIIT Key Laboratory of Servo Motion System Drive and Control, School of Automation, Beijing Institute of Technology, Beijing 100081, China (e-mail: nearfar1jy@gmail.com, daweishi@bit.edu.cn).}
\thanks{Ling Shi is with the Department of Electronic and Computer Engineering, Hong Kong University of Science and Technology, Clear Water Bay, Kowloon, Hong Kong (e-mail: eesling@ust.hk).}
}

\maketitle

\begin{abstract}
With the increasing availability of streaming data in dynamic systems, a critical challenge in data-driven modeling for control is how to efficiently select informative data to characterize system dynamics. 
In this work, we develop an online coreset selection method for set-membership identification in the presence of process disturbances,  improving data efficiency while preserving convergence guarantees.
Specifically, we derive a stacked polyhedral representation that over-approximates the feasible parameter set.
Based on this representation, we propose a geometric selection criterion that retains a data point only if it induces a sufficient contraction of the feasible set.
Theoretically, the feasible-set volume is shown to converge to zero almost surely under persistently exciting data and a tight disturbance bound.
\rev{When the disturbance bound is mismatched, an explicit Hausdorff-distance bound is derived to quantify the resulting identification error.
In addition, an upper bound on the expected coreset size is established and extensions to nonlinear systems with linear-in-the-parameter structures and to bounded measurement noise are discussed.}
The effectiveness of the proposed method is demonstrated through comprehensive simulation studies.
\end{abstract}

\begin{IEEEkeywords}
Online coreset selection;
Set-membership identification;
Convergence analysis;
Polyhedral computing;
\end{IEEEkeywords}

\section{Introduction}
\label{sec:introduction}
\IEEEPARstart{R}{ecent} years have seen rapid advancements in data-driven control (DDC)~\cite{hou2013from,de2019formulas,van2023informativity}.
Among various methodologies for DDC, a well-established procedure involves first characterizing the set of system models consistent with observed data, often termed the \textit{feasible set}, and subsequently designing a robust controller to satisfy control objectives for all systems within this set~\cite{van2023informativity}.
The former task belongs to set-membership identification (SMI), which aims to learn a nominal set of systems from data~\cite{fogel1982value}.
In online learning processes, a major challenge arises from the fact that dynamic systems continuously generate a huge volume of data. 
Directly applying all raw data for DDC will lead to an increasing computational burden and unlimited storage growth~\cite{karimi2017data}.
Consequently, developing efficient strategies to enhance data efficiency while preserving essential information is crucial for practical deployment of set-based data-driven methodologies in real-time control tasks.

Coreset selection has been developed as a prominent technique to efficiently processing large datasets, particularly in machine learning and data analytics~\cite{feldman2020core,borsos2020coresets,yoon2021online}. 
The objective is to identify a small yet representative subset of the original dataset so that models or parameters learned from this coreset closely approximate those derived from the full dataset.
This approach has demonstrated notable success in accelerating various tasks on static, large-scale datasets, such as 
support vector machines~\cite{tsang2005core}, 
$k$-clustering~\cite{har2005smaller}, 
logistic regression~\cite{huggins2016coresets,munteanu2018coresets}, 
Gaussian mixture models~\cite{feldman2011scalable,lucic2018training}  
and deep learning~\cite{sener2017active,guo2022deepcore}. 
Nevertheless, extending coreset selection method to dynamic systems introduces significant challenges primarily due to the inherent online nature, the violation of the independent and identically distributed (i.i.d.) assumption, and the task-specific goals of system identification and control.
Despite these difficulties, recent efforts in DDC have explored intelligent data selection strategies to improve data efficiency for dynamic systems.
For example,\cite{umlauft2019feedback} proposed an event-triggered mechanism for Gaussian processes, where data are selected based on model uncertainty. In\cite{zheng2022nonparameteric}, data selection is triggered when prediction error or distance exceeds a threshold under a Lazily Adapted Constant Kinky Inference model. In~\cite{zheng2024safety}, new data are incorporated when the system state deviates significantly from a reference trajectory, prompting model updates.
However, for fully-parameterized linear systems with bounded disturbance noises, the design of coreset selection strategies that ensure provable performance guarantees remains an open challenge.

To enhance data efficiency in dynamic system learning, SMI provides a natural and powerful framework. 
Instead of deriving point estimates \rev{(e.g., least-squares (LS) methods)}, SMI characterizes the entire set of parameters consistent with observed data and prior disturbance bounds~\cite{schweppe1968recursive,fogel1982value,milanese2013bounding}, offering a principled way to assess the informativeness of new data. 
Several methods for enhancing data efficiency in online learning processes have been explored under this framework.
One line of research is called bounded complexity set update, which constrains the feasible set's complexity -- e.g., via polytopes with fixed hyperplane directions~\cite{veres1999limited}, parallelotopes~\cite{chisci1998block}, or compact over-approximations~\cite{broman1990compact}. 
Another line of works focus on informativeness-driven updates, where the selecting data that are beneficial for reducing uncertainty. For instances, the feasible set is updated only when incoming data sufficiently reduce uncertainty characterized ~\cite{fogel1982value,milanese1991optimal}. 
While these approaches aim to improve data efficiency, a crucial aspect remains underexplored: do such selection schemes guarantee convergence of the feasible set? This question is especially critical when representing feasible set from limited or irregularly sampled data streams.

With the above discussion, the convergence of feasible set in online learning processes emerges as a critical concern.
This form of convergence exhibits two key challenges that distinguish it from standard parameter convergence analysis. 
First, it corresponds to worst-case convergence of the points in a set. 
Second, the convergence is concerned under subsets of entire data.
The analysis on the worst-case convergence under the entire dataset has been explored in many previous works~\cite{bai1998convergence,akccay2004size,lu2021robust,lu2023robust,sasfi2023robust,li2024learning}. 
In \cite{bai1998convergence}, a point-wise asymptotic convergence of the feasible set for discrete-time scalar system was derived under tight noise and persistently exciting regressors. 
\cite{akccay2004size} investigated the convergence of size of feasible set. 
For linear systems, \cite{lu2021robust} derived an asymptotic convergence property under persistent excitation of regressors and tight noise. 
In \cite{li2024learning}, a non-asymptotic convergence of size of feasible set under tight noise and block-martingale small-ball (BMSB) condition was proved. 
The convergence properties under irregular or event-triggered sampling have also attracted increasing attention~\cite{diao2018event,tu2024learning,li2020learning}.
In~\cite{diao2018event}, an event-triggered binary identification algorithm was proposed with guaranteed convergence.
In \cite{tu2024learning}, convergence of parameters is discussed considering multi trajectories using least squares methods, and the convergence rate is derived. 
Similarly, \cite{li2020learning} approached learning from irregularly-sampled time series from a missing data perspective, proposing generative models within an encoder-decoder framework to capture the underlying data distributions.
However, the convergence properties of feasible set under a selected coreset have not been analyzed before and need further exploration.

In this paper, we study the problem of online coreset selection design for learning unknown systems within a set-membership identification framework. 
We focus on linear time-invariant system with bounded additive disturbances, and aim to design a criterion for selecting data in online learning processes. The goal is to construct a coreset that yields a feasible set with significantly fewer data points, while still ensuring convergence guarantees for the learning algorithm. 
To achieve this goal, we first propose an over-approximated polyhedral representation of feasible set for \rev{bounded} disturbance descriptions. 
Inspired by the work of generalized Gr\"unbaum's inequality~\cite{letwin2024generalization}, we design a selection criterion based on the geometrical relation between new data points and existing uncertainty representation. However, several challenges need to be addressed to enable our design.
First, how can the feasible set be effectively represented using only the coreset data?
Moreover, how to update the polyhedral representation as new data points are selected.
In addition, does the dataset selection criterion based on the geometrical relation guarantees the convergence? How to derive the asymptotic and non-asymptotic expression of size of feasible set.
The main contributions of this work are summarized as follows:
\begin{enumerate}
    \item An online coreset selection method under SMI is developed to efficiently update the feasible parameter set (FPS).
    Specifically, we derive a stacked polyhedral representation that fully characterizes the FPS under bounded disturbances~(\textbf{Proposition~\ref{proposition:representation}}).
    Building on this representation, a threshold-triggered selection mechanism is designed to retain only informative data, and an online learning framework is proposed~(\textbf{Algorithm~\ref{algorithm:OnlineLearning}}).

    \item Convergence guarantee for the data-driven feasible set is established. 
    With persistently exciting data and a tight disturbance bound, the volume of the feasible set is shown to converge to zero almost surely~(\textbf{Theorem~\ref{Theorem:ThresholdTriggerConvergence}}).
    \rev{Moreover, when the disturbance bound is not tight, an explicit Hausdorff-distance bound between the feasible sets induced by the coreset and the full dataset is derived, together with an asymptotic characterization of the limiting feasible set~(\textbf{Theorem~\ref{theorem:dist_alpha0}}).} 

    \item Statistical characterizations of the online data selection protocol are provided.
    In particular, an upper bound on the expected number of selected data points is derived~(\textbf{Theorem~\ref{theorem:expected_bounds}}). \rev{The trade-off between complexity and  performance induced by the selection threshold $\alpha_0$ is then quantified.}
    \rev{The proposed method and its properties are further extended to nonlinear models with linear-in-the-parameters structures~(\textbf{Corollary~\ref{corollary:nonlinear}}) and to settings with bounded measurement noise~(\textbf{Corollary~\ref{corollary:noisyMeasurement}}).}
\end{enumerate}

The rest of the paper is organized as follows. 
Section~\ref{section: preliminaries and problem formulation} introduces preliminaries and problem formulation. 
\rev{The polyhedral representation of the feasible set, the design of the online coreset selection criterion and the overall algorithm is presented in Section~\ref{section:OnlineLearningDesign}.}
The main theoretic results, including the convergence of feasible set constructed from coreset, as well as upper bounds on the coreset cardinality are provided in Section~\ref{section:convergenceAnalysis}.
Extensions, computational complexity, and the resulting trade-off are discussed in Section~\ref{section:ExtensionAndDiscussion}.
Several numerical examples and implementation results are represented and discussed in Section~\ref{section:simulation}, followed by some concluding remarks in Section~\ref{section:conclusions}.

\textit{Notation:}
Let $\mathbb{R},\mathbb{Z},\mathbb{R}_{\geq 0},\mathbb{Z}_{>0}$ denote the set of real numbers, integers, nonnegative real numbers and positive integers respectively. 
For integers $i_1<i_2$, let $[i_1:i_2] \triangleq \{i_1, i_1+1, \dots,i_2\}$ and $[N] \triangleq [1:N]$.
Let $\mathbb{R}^n$ and $\mathbb{R}^{n \times m}$ denote the $n$-dimensional Euclidean space and the space of $n \times m$ real matrices, respectively.
For $x,y \in \mathbb{R}^n$, let $\langle x,y\rangle$ denote the inner product, and $x(i)$ its $i$-th coordinate.
For $x,y \in \mathbb{R}^n$, write $x < y$ if $x(i) < y(i)$ for all $i \in [n]$, and $x \leq y$ analogously.
For $M \in \mathbb{R}^{n \times m}$, let $M(i,j)$, $M(i,:)$, and $M(:,j)$ denote the $(i,j)$-th entry, $i$-th row, and $j$-th column, respectively.
$M_1 \succeq M_2$ denotes that $M_1 - M_2$ is positive semi-definite.
Denote the Kronecker product by $M_1 \otimes M_2$ and the Cartesian product of $m$ copies of $\mathbb{R}^n$ by $(\mathbb{R}^n)^m$. 
For $x \in \mathbb{R}^{n}$, let $\Vert x \Vert_2 = \sqrt{\langle x,x \rangle }$ denote the 2-norm of vector and $\Vert x \Vert_{\infty}$ denote $\infty$-norm of vector.
The multivariate normal distribution is denoted by $\mathcal{N}(\mu,\Sigma)$.
For a finite set $\mathcal{A}$, $|\mathcal{A}|$ denotes its cardinality.
\rev{The convex and conic hulls of the columns of a matrix $K$ are denoted by 
$\text{conv}(K) \triangleq \{K \lambda \mid \lambda \geq 0,\, \sum_i \lambda_i = 1\}$ and $\text{cone}(K) \triangleq \{K \eta \mid \eta \geq 0\}$, respectively.}
\rev{For sets $X,X' \subseteq \mathbb{R}^n$, the Minkowski sum is $X\oplus X' \triangleq \{x+x' \mid x \in X,\, x' \in X'\}$.}
\rev{The distance from a point $u\in \mathbb{R}^n$ to set $X\subset \mathbb{R}^n$ is denoted by $\mathrm{dist}(u,X) = \inf_{x \in X} \left\lVert u - x\right\rVert_2$}.

\section{Preliminaries and Problem Formulation}
\label{section: preliminaries and problem formulation}
Consider a discrete linear time-invariant system
\begin{equation} \label{eq:system}
    x_{k+1} = Ax_k + B u_k + w_k,
\end{equation}
where 
$x_k \in \mathbb{R}^{n_x}$ is  the state of the system at time $k$, $u_k \in \mathbb{R}^{n_u}$ is the control input, $w_k \in \mathbb{R}^{n_x}$ is the bounded disturbance, $A \in \mathbb{R}^{n_x \times n_x}$ and $B \in \mathbb{R}^{n_x \times n_u}$ are \textit{unknown}  system matrices.

Let $z_k \triangleq \begin{bmatrix}x_k^T & u_k^T\end{bmatrix}^T \in \mathbb{R}^{n_z}$ be a state-input vector at time $k$ where $n_z = n_x + n_u$. 
For any $K>0$ and any set $S = \{k_1,...,k_{\vert S\vert}\} \subseteq [K+1]$, we denote $Z_S \triangleq [z_{k_1}, \dots, z_{k_{\vert S\vert}}]$ and $X_S \triangleq [x_{k_1}, \dots, x_{k_{\vert S\vert}}]$ as grouped state-input matrix and state matrix indexed by $S$, respectively. 

\subsection{Preliminaries}

\subsubsection{Set-membership Identification}
In this section, we review SMI for linear system with bounded noise.  
Consider the system described by equation~(\ref{eq:system}). 
Suppose that the additive disturbance $w_k$ satisfies the following assumption:

\begin{assumption}
    \label{as:disturbance}
    The disturbance sequence $ \{w_k\}_{k \geq 0} \subset \mathbb{R}^{n_x}$ takes values in a known compact, convex set $ \mathcal{W} \subset \mathbb{R}^{n_x} $, which is assumed symmetric w.r.t. the origin.
    Moreover, $ \{w_k\}_{k \geq 0} $ is assumed to be a sequence of independent and identically distributed (i.i.d.) random vectors with a probability measure $ \mathbb{P} $ supported on $ \mathcal{W} $.
\end{assumption}

\begin{remark}
Several existing works impose specific structures on the disturbance set $\mathcal{W}$, e.g., a bounded polytope \cite{lorenzen2019robust}, an ellipsoid \cite{feng2025data}, or a zonotope \cite{alanwar2023data}. Here, we only assume that $ \mathcal{W} \subset \mathbb{R}^{n_x}$ is a known compact and convex set, which covers a broader class of uncertainty descriptions.
In addition, the origin symmetry of $\mathcal{W}$ is adopted mainly for convenience; no symmetry is imposed on $\mathbb{P}$, hence $w_k$ need not be zero-mean.
\end{remark}

Let $\Theta_{*} = \begin{bmatrix} A & B \end{bmatrix}$ denote the true system parameters. 
For any subset $\mathcal{S}  \subseteq \mathbb{Z}_{>0} $, we define
\begin{equation}
    \label{eq:SME}
    \Theta_{\mathcal{S}} \triangleq \bigcap_{k \in \mathcal{S}} \{{\Theta}:x_{k}-{\Theta} z_{k-1} \in \mathcal{W}\}.
\end{equation}
Then, by construction,  the true system parameters $\Theta_*$ satisfies:
\begin{equation} \label{eq:SME_Theta*}
    \Theta_* \in \Theta_{\mathcal{S}}.
\end{equation}

The intersection $\Theta_{\mathcal{S}}$ provides a set-valued characterization of the uncertainty in the system parameters, which can be viewed as an uncertainty qualification for  $\Theta_*$. 
The right-hand side of~(\ref{eq:SME_Theta*}) contains all feasible parameters consistent with the input-output data.

\subsubsection{Polyhedra and Polytopes}
We introduce some basic concepts and properties related to polyhedral representation in this section.
Let $H \in \mathbb{R}^{n \times m}$ be a matrix and $c \in \mathbb{R}^n$ be a vector. A polyhedron in $\mathcal{H}$-representation (also called an $\mathcal{H}$-polyhedron) is defined by
\begin{equation}\label{eq:H-polyhedron}
    P_{\mathcal{H}}(H, c) \triangleq \{ x \in \mathbb{R}^m \mid Hx \leq c \},
\end{equation}
which is the solution set of a system of linear inequalities.
By the Minkowski-Weyl theorem~\cite{minkowski1989allgemeine}, any polyhedron admits a dual $\mathcal{V}$-representation, characterized as the Minkowski sum of the convex hull of a finite set of vertices $V=\{v_1,\dots,v_p\}$ and the conic hull of a finite set of rays $R=\{r_1,\dots,r_q\}$:
\begin{equation}\label{eq:V-polyhedron}
    P_{\mathcal{V}}(V, R) \triangleq \text{conv}(V) \oplus \text{cone}(R),
\end{equation}
where $\text{conv}(\cdot)$ and $\text{cone}(\cdot)$ are defined in \textit{Notation}.

The centroid of a polytope $P$ is defined as 
\begin{equation}
    \label{eq:centroid}
    g(P) \triangleq \frac{1}{\mu(P)} \int_P x \dd{x},
\end{equation}
where $\mu(\cdot)$ denotes the Lebesgue measure (volume) of an $m$-dimensional convex body.

The \textit{supporting function} $h_P: \mathbb{R}^m \to \mathbb{R}$ for a convex set $P \subset \mathbb{R}^m$ is
\begin{equation}
    h_P(\xi) \triangleq \max_{x \in P}\{\langle x, \xi\rangle\}.
\end{equation} 

We denote the \textit{centered supporting function} $\widetilde{h}_P:\mathbb{R}^m \to \mathbb{R}$ w.r.t. the centroid $g(P)$ as :
\begin{equation}\label{eq:centeredSF}
    \widetilde{h}_P(\xi) \triangleq \max_{ x \in P}\{\langle x - g(P), \xi\rangle\}.
\end{equation}
Then, by \cite{weyl1934elementare}, we have the following inequalities:
\begin{equation}
    \label{eq:h_p(pmZ)}
    \frac{1}{m} \widetilde{h}_P(\xi) \leq \widetilde{h}_P(-\xi) \leq m \widetilde{h}_P(\xi).
\end{equation}

\subsection{Problem Formulation}
Due to uncertainty of $w$, we cannot obtain $\Theta_*$ precisely from dataset. 
Alternatively, we leverage set-membership identification to characterize the feasible set of $\Theta_*$ and design selection mechanism in an online fashion to enhance data efficiency. 
Over time period $[K]$, if all process data are kept as coreset (the trivial case), the SMI of $\Theta_*$ by~(\ref{eq:SME}) is represented by $\Theta_{[K]}$.
Denote $\mathcal{D}_k = \{z_{k-1},x_{k}\}$ as the dataset to be determined to select at time step $k$ which determines a feasible set as shown in~(\ref{eq:SME}). Let $\mathcal{S}_K \subset [K]$ be a subset of all time steps until $K$. 
Then, the coreset constructed for SMI is represented by $\mathcal{D}_{\mathcal{S}_K} = \cup_{k \in \mathcal{S}_K} \mathcal{D}_k$.
The corresponding SMI for $\Theta_*$ is denoted by $\Theta_{\mathcal{S}_K}$.

Our objective is to design an online coreset selection criterion to update  $\mathcal{S}_K$. 
The selected coreset yields a reduced dataset whose size is significantly smaller than total number of samples $K$, while ensuring some learning performance guarantees. 
However, designing such a selection criterion is nontrivial due to several intertwined challenges:

First, the basic \rev{SMI} formulation (\ref{eq:SME}) does not explicitly characterize the \textit{estimation uncertainty}, i.e., it does not reveal how the volume or size of the feasible set evolves as new data (e.g., $x_{k+1}$) arrive. 
An analytical representation of the feasible set is required for such $\mathcal{W}$. This motivates the first problem:

\textbf{Problem 1}: How can we construct a tractable and analytical representation of the feasible set that enables quantification and comparison of estimation uncertainty?

Given such a representation, the next challenge is to develop a principled efficient data selection criterion that determines whether newly arriving data should be retained.

\textbf{Problem 2}: How to design a selection criterion that retains informative data points for reducing uncertainty of feasible set?

Finally, and most critically, the designed selection criterion should preserve the theoretical guarantees of the SMI framework. 
In particular, it is required that the feasible set constructed in an online fashion using coreset converges to true parameters as more data become available. 
This leads to the third problem:

\textbf{Problem 3}: Under the proposed coreset selection criterion, how to establish convergence properties of the feasible set? 
What are the statistical properties of the selection probability and the cumulative number of selected data points in online learning processes?

\section{Online Coreset Selection Design}
\label{section:OnlineLearningDesign}

In this section we address Problems 1 and 2. 
First, we derive a stacked polyhedral over-approximation of feasible parameters set, which provides a tractable and analytical representation of the estimation uncertainty. 
Subsequently, we propose a coreset selection  mechanism that identifies and retains informative data points.
Finally, based on the double description method for polyhedra, we develop a dynamic constraints reduction algorithm to further optimize the representation by eliminating redundant constraints.

\subsection{Polyhedral Over-approximation of Feasible Set}
\label{subsection:representation}
The main idea behind deriving the data-based representation of full parameterized feasible set is to decompose uncertain matrix $\Theta \subset \mathbb{R}^{n_x\times n_z}$ into $n_x$ row vectors in $\mathbb{R}^{n_z}$, each constrained within a data-dependent polyhedron. 
To illustrate the idea, we denote row-stacking map $\Phi : (\mathbb{R}^n)^m \to \mathbb{R}^{m \times n}$ as 
\begin{equation}
    \label{eq:Phi}
    \Phi(v_1,...,v_m) = \begin{bmatrix}v_1^T & \dots & v_m^T\end{bmatrix}^T,
\end{equation}
where each $v_i \subset \mathbb{R}^n$ is treated as a column vector set, and $v_i^T$ denotes its transpose(a row vector set). 
Then, matrix set $\Theta \subset \mathbb{R}^{n_x\times n_z}$ can be represented as $\Theta = \Phi(\theta^1,\dots,\theta^{n_x})$ where $\theta^i \subset \mathbb{R}^{n_z}$ for all $i \in [n_x]$. 
The true system parameters $\Theta_*$ can be written as $\Theta_* = \Phi(\theta_*^1, \theta_*^2, \dots, \theta_*^{n_x})$ where $\theta_*^i \in \mathbb{R}^{n_z}$ are constant vectors.   
Denote initial uncertainty matrix set $\Theta_0$ as 
\begin{equation}
    \label{eq:Theta0}
    \Theta_0 \triangleq \Phi(\theta_0^1,\dots,\theta_0^{n_x}).
\end{equation} We consider the following assumptions on $\Theta_0$:
\begin{assumption}\label{as:initialSet}
    $\forall i \in [n_x]$, $\theta_*^i $ lies in a known, bounded, and convex polytope $\theta_0^i = P_{\mathcal{H}}(H_0^i,c_0^i)$, i.e., $\theta_*^i \in \theta_0^i = \{\theta^i : H_0^i \theta^i \leq c_0^i\}$. 
\end{assumption}

Note that assumptions on prior bounded uncertainty are common in set-membership identification~\cite{lorenzen2019robust}.  
\rev{To ensure tractable set operations, let $\bar{w}\in \mathbb{R}^{n_x}$ denote the vector of element-wise \revE{maximum} of $\mathcal{W}$. Under the symmetry assumption in Assumption~\ref{as:disturbance}, it follows that $w \in \mathcal{W} \implies -\bar{w} \leq w \leq \bar{w}$}.
\rev{This amplitude-bounded form serves as a conservative outer approximation of the original disturbance set.}

\begin{revBlock}
We now derive the data-driven representation of the feasible set.
\revE{The construction is separable across state equations: each component $i\in[n_x]$ induces a feasible set in $\mathbb{R}^{n_z}$, and the feasible set for the full system is obtained by stacking these component-wise sets via the row-stacking map $\Phi(\cdot)$ in~\eqref{eq:Phi}.}
\revE{To reduce notation burden without loss of generality, we first present the construction for a single state equation (a fixed $i\in[n_x]$).}
At each time step $k$, we have $x_{k}(i) = z_{k-1}^T \theta_{*}^i + w_{k-1}(i)$ and $w_{k-1}(i) \in [-\bar{w}(i), \bar{w}(i)]$. From Eq.~(\ref{eq:SME}), the true parameter $\theta_*^i$ must lie within the feasible \revE{set} $\mathcal{S}_k^i$:
\begin{equation}\label{eq:one_state_illustration}
    \mathcal{S}_k^i \triangleq \{ \theta^i  : -\bar{w}(i) + x_{k}(i) \leq z_{k-1}^T \theta^i \leq \bar{w}(i) + x_{k}(i)\}.
\end{equation}

Over a horizon of $K$ samples, the feasible parameter set for the $i$-th component is the intersection of these sets with the initial set $\theta_0^i$. This intersection naturally forms a polyhedron:
\begin{equation}\label{eq:intersection_logic}
    P(\mathcal{D}_{[K]}^i) = \theta_0^i \cap \left( \bigcap_{k=1}^K \mathcal{S}_k^i \right),
\end{equation}
\revE{which} can be written in the following compact $\mathcal{H}$-representation:
\begin{equation}\label{eq:PolyhedralSet}
    P(\mathcal{D}_{[K]}^i) \triangleq \{ \theta \in \mathbb{R}^{n_z}: H_K^i \theta \leq C_K^i \},
\end{equation}
where the matrices $H_K^i$ and $C_K^i$ are constructed by stacking the constraints from \eqref{eq:one_state_illustration}:
\begin{equation}
    \label{eq:HC}
    \!\!\!H_K^i \triangleq \begin{bmatrix}
        H_0^i \\ 
    Z_{[0:K-1]}^T \\
    -Z_{[0:K-1]}^T
    \end{bmatrix}\!,
    C_K^i \triangleq \begin{bmatrix}
        c_0^i \\ 
        (X_{[K]}(i,:))^T + \overline{W}(i)  \\ 
        -(X_{[K]}(i,:))^T + \overline{W}(i)
    \end{bmatrix}\!,
\end{equation}
with $\overline{W}(i) \triangleq \mathbf{1}_{K} \otimes \bar{w}(i)$ where $\mathbf{1}_K \triangleq [1,\dots,1]^T \in \mathbb{R}^K$ is the all-ones vector. 

\revE{Since $i$ is arbitrary, the above construction applies to every state equation $i\in[n_x]$.
We then consider the full system and define the stacked feasible set via $\Phi(\cdot)$ in~\eqref{eq:Phi}, which is summarized in the following proposition.}

\end{revBlock}

\begin{proposition}
    \label{proposition:representation}
    Consider system (\ref{eq:system}) with full dataset $\mathcal{D}_{[K]}$. Suppose Assumptions \ref{as:disturbance} and \ref{as:initialSet} hold. Then, the stacked polyhedral set
    \begin{equation}
        \label{eq:polyhedral_representation}
        \widehat{\Theta }_{K}^{\text{full}} \triangleq \Phi(P(\mathcal{D}_{[K]}^1),P(\mathcal{D}_{[K]}^2),...,P(\mathcal{D}_{[K]}^{n_x}))
    \end{equation}
    satisfies
    \begin{equation}
        \widehat{\Theta }_{K}^{\text{full}} \supseteq (\Theta_0 \cap \Theta_{[K]}),
    \end{equation}
    where $\Theta_0$ and $\Theta_K$ are defined in~(\ref{eq:Theta0}) and (\ref{eq:SME_Theta*}), respectively.
\end{proposition}

\begin{proof}
    We complete the proof by showing that $\widehat{\Theta }_{K}^{\text{full}} $ is an over-approximation of $\Theta_0 \cap \Theta_{[K]}$. 
    For all $w \in \mathcal{W}$, every coordinate of $w$ is bounded by $\bar{w}$. Then, we have:
    \begin{equation}\label{eq:bound}
        \begin{bmatrix}
        I \\
        -I
        \end{bmatrix}w\leq 
        \begin{bmatrix}
        \bar{w} \\
        \bar{w}
        \end{bmatrix}.
    \end{equation}

    Based on \rev{SMI}, for each $\Theta = \Phi(\theta^1,\dots,\theta^{n_x}) \in \Theta_0 \cap \Theta_{[K]}$, we have
    $x_{k} - \Theta z_{k-1} \in \mathcal{W}$. Thus, 
    \begin{equation}\label{eq:boundState}
        \begin{bmatrix}
            I \\
            -I
        \end{bmatrix}
        (x_{k} - \Theta z_{k-1}) \leq
        \begin{bmatrix}
            \bar{w} \\
            \bar{w}
        \end{bmatrix} 
    \end{equation} holds for all $k \in [K]$. For all $i \in [n_x]$, stacking the $i$-th and the $(i+n_x)$-th rows horizontally and transposing the resulting matrix, we infer from~(\ref{eq:boundState}) that
    \begin{equation}
        \begin{bmatrix}
        z_{k-1}^T \\
        -z_{k-1}^T
        \end{bmatrix}\theta^i \leq
        \begin{bmatrix}
        x_{k}(i) + \bar{w}(i) \\
        -x_{k}(i) + \bar{w}(i)
        \end{bmatrix}.
    \end{equation}
    Then, by incorporating the initial feasible constraints of $\theta^i$ in Assumption~\ref{as:initialSet}, we have $\theta^i \in P(\mathcal{D}_K^i)$ holds for all $i \in [n_x]$. Consequently, $\Theta \in \widehat{\Theta }_{K}^{\text{full}}$, and thus $ \widehat{\Theta }_{K}^{\text{full}} \supseteq (\Theta_0 \cap \Theta_{[K]})$.
\end{proof}

\begin{remark}
    Proposition~\ref{proposition:representation} provides an explicit data-based representation $\widehat{\Theta}_K^{\text{full}}$ of the true parameters $\Theta_*$ by \rev{SMI} using the full dataset over time period $[K]$.
    Compared with existing polyhedral representation of feasible set that directly operates on $\mathrm{vec}(\Theta_*) \in \mathbb{R}^{n_x n_z}$(i.e. \cite{lauricella2020set}), the proposed representation has a key advantage: each polyhedral component of $\widehat{\Theta}_K^{\text{full}}$ lies in a significantly lower-dimensional space $\mathbb{R}^{n_z}$. 
    This dimensionality reduction enables the use of standard polyhedral operations--such as conversions between vertex and hyperplane-representations--for systems of moderate size, which could reduce computational burden commonly encountered in high-dimensional polyhedral set representations.
    By this representation, one can achieve a favorable trade-off: although the number of polyhedra increases, their individual low dimension renders many otherwise intractable operations computationally feasible.
\end{remark}

\subsection{Criterion for Coreset Selection}
\label{subsection:criterion}
In this subsection we focus on the problem of how to design coreset selection criterion for data selection and thus improve data efficiency.
Suppose the learning process has continued up to time step $K-1$. 
The set $\Phi(P_{K-1}^1,P_{K-1}^2,\dots,P_{K-1}^{n_x})$ is a data-based representation containing true parameters $\Theta_*$ where $P_{K-1}^i$ for $i \in [n_x]$ are polyhedral sets constructed from selected process data. 
At time instant $K$, for each $i \in [n_x]$, the sampling of $x_{K}(i)$ generates two half-space constraints which can be written as 
\begin{align} 
    H_{+}^i & := \{\theta: \quad (z_{K-1})^T \theta \leq x_{K}(i) + \bar{w}(i)\},\label{eq:H+} \\
    H_{-}^i &:= \{\theta: (-z_{K-1})^T \theta \leq -x_{K}(i) + \bar{w}(i)\} .\label{eq:H-}
\end{align}

These half-space constraints can be equivalently expressed via centered supporting function. 
For the half-space constraints $H_{+}^i$ in~(\ref{eq:H+}), it can be reformulated as:
\begin{equation}\label{eq:z1}
    -z_{K-1}^T (\theta - g(P_{K-1}^i)) \geq -x_{K}(i) - \bar{w}(i) + z_{K-1}^T g(P_{K-1}^i).
\end{equation}
The right-hand side of (\ref{eq:z1}) can be rewritten as $\alpha_+^i \widetilde{h}_{P_{K-1}^i}(z_{K-1})$. 
Here, $\widetilde{h}_{P_{K-1}^i}(z_{K-1})$ denotes the centered supporting function defined in (\ref{eq:centeredSF}), and the scalar $\alpha_+^i$, \rev{which we refer to as the \textit{normalized offset}}, is given by
\begin{equation}\label{eq:alpha+}
    \alpha_+^i = \frac{-x_{K}(i) - \bar{w}(i)+z_{K-1}^T g(P_{K-1}^i)}{\widetilde{h}_{P_{K-1}^i}(z_{K-1})}.
\end{equation}

We can apply the same procedure to $H_{-}^i$ and obtain the analogous expression:
\begin{equation}
    H_{-}^i = 
    \{
        \theta: z_{K-1}^T (\theta - g(P_{K-1}^i)) \geq \alpha_-^i \widetilde{h}_{P_{K-1}^i}(-z_{K-1})
    \}
\end{equation}
with the corresponding normalized offset:
\begin{equation}\label{eq:alpha-}
    \alpha_-^i = \frac{x_{K}(i) - \bar{w}(i)-z_{K-1}^T g(P_{K-1}^i)}{\widetilde{h}_{P_{K-1}^i}(-z_{K-1})}.
\end{equation}

Then, we have the following proposition in a compact form.

\begin{proposition}
    \label{proposition:SupportingFunctionRepresentation}
    Let $K \in \mathbb{Z}_{>0}$ be fixed. For each $i \in [n_x]$, the constraints $H_{+}^i$ and $H_{-}^i$ in (\ref{eq:H+}) and (\ref{eq:H-}) admit the equivalent representation in terms of $\alpha_{+}^i$ and $\alpha_{-}^i$ in the following form:
    \begin{align*}
        H_{+}^i &= \{\theta : \langle -z_{K-1}, \theta - g(P_{K-1}^i) \rangle \geq \alpha_{+}^i \widetilde{h}_{P_{K-1}^i}(z_{K-1}) \}, \\ 
        H_{-}^i &= \{\theta : \langle z_{K-1}, \theta - g(P_{K-1}^i) \rangle \geq \alpha_{-}^i \widetilde{h}_{P_{K-1}^i}(- z_{K-1}) \},
    \end{align*}
    where $\alpha_{+}^i$ and $\alpha_{-}^i$ are given in (\ref{eq:alpha+}) and (\ref{eq:alpha-}) respectively.
\end{proposition}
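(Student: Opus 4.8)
The plan is to treat the statement as a purely algebraic reformulation: both $H_{+}^i$ and $H_{-}^i$ are single affine half-spaces, and the claimed expressions are again single affine inequalities, so it suffices to rewrite each defining inequality into the target form through a sequence of reversible (hence set-preserving) manipulations. Concretely, for each constraint I would (i) flip its sign to turn a $\leq$ into a $\geq$, (ii) recenter the decision variable at the centroid $g(P_{K-1}^i)$ so that the left-hand side becomes an inner product of the form $\langle \pm z_{K-1}, \theta - g(P_{K-1}^i)\rangle$, and (iii) identify the resulting constant on the right with a scalar multiple of the centered support value $\widetilde{h}_{P_{K-1}^i}(\pm z_{K-1})$.

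For $H_{+}^i$ I would start from $z_{K-1}^T \theta \leq x_{K}(i) + \bar{w}(i)$ in (\ref{eq:H+}), multiply through by $-1$, and add $z_{K-1}^T g(P_{K-1}^i)$ to both sides; this produces exactly (\ref{eq:z1}), whose left-hand side is $\langle -z_{K-1}, \theta - g(P_{K-1}^i)\rangle$. It then remains only to observe that the right-hand constant $-x_{K}(i) - \bar{w}(i) + z_{K-1}^T g(P_{K-1}^i)$ coincides with $\alpha_{+}^i\, \widetilde{h}_{P_{K-1}^i}(z_{K-1})$ upon substituting the definition (\ref{eq:alpha+}) of $\alpha_{+}^i$ and cancelling the support value, which gives the first identity. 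The second identity for $H_{-}^i$ follows by the same three steps applied to $-z_{K-1}^T \theta \leq -x_{K}(i) + \bar{w}(i)$ in (\ref{eq:H-}): flipping the sign and subtracting $z_{K-1}^T g(P_{K-1}^i)$ yields $\langle z_{K-1}, \theta - g(P_{K-1}^i)\rangle \geq x_{K}(i) - \bar{w}(i) - z_{K-1}^T g(P_{K-1}^i)$, and the constant matches $\alpha_{-}^i\, \widetilde{h}_{P_{K-1}^i}(-z_{K-1})$ by the definition (\ref{eq:alpha-}).

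The only point requiring care---and the step I expect to be the genuine obstacle rather than the bookkeeping---is the well-definedness of $\alpha_{+}^i$ and $\alpha_{-}^i$, since both are defined by division by $\widetilde{h}_{P_{K-1}^i}(z_{K-1})$ and $\widetilde{h}_{P_{K-1}^i}(-z_{K-1})$, respectively. I would justify this by noting that $P_{K-1}^i$ is a bounded polytope with nonempty interior, so its centroid $g(P_{K-1}^i)$ lies in its interior; hence, for the nonzero direction $z_{K-1}$, the centered support function $\widetilde{h}_{P_{K-1}^i}(\xi) = \max_{x \in P_{K-1}^i} \langle x - g(P_{K-1}^i), \xi\rangle$ in (\ref{eq:centeredSF}) is strictly positive at $\xi = \pm z_{K-1}$. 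This makes both denominators nonzero, so the scalars are well-defined and the cancellation in step (iii) is legitimate; since every manipulation is an equivalence, the two sets are described by identical inequalities and the claimed set equalities follow.
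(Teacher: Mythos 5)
Your proof is correct and follows essentially the same route as the paper, whose ``proof'' is precisely the algebraic derivation in the text preceding the proposition: negate each inequality, recenter at $g(P_{K-1}^i)$ to obtain (\ref{eq:z1}) and its analogue, and identify the resulting constant with $\alpha_{\pm}^i\,\widetilde{h}_{P_{K-1}^i}(\pm z_{K-1})$ via the definitions (\ref{eq:alpha+}) and (\ref{eq:alpha-}). Your added remark on well-definedness (strict positivity of the centered support function at $\pm z_{K-1}$, requiring $z_{K-1}\neq 0$ and $P_{K-1}^i$ full-dimensional) is a point the paper silently assumes, so it strengthens rather than departs from the argument.
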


\begin{remark}
    \rev{The parameters $\alpha_{+}^i$ and $\alpha_{-}^i$ characterize the relative location of the induced hyperplanes with respect to the polyhedral uncertainty set.}
    Consider half-space $H_{+}^i$ and the polytope $P_{K-1}^i$ as an example:
    When $\alpha_{+}^i = -1$, (\ref{eq:H+}) can be reformulated as $z_{K-1}^T(\theta - g(P_{K-1}^i)) \leq \widetilde{h}_{P_{K-1}^i}(z_{K-1})$. 
    This inequality holds for all $\theta \in P_{K-1}^i$. 
    Equality is achieved when $\theta = \arg \max_{x \in P_{K-1}^i} \{ \langle x - g(P_{K-1}^i),z_{K-1}\rangle \}$, indicating that $H_{+}$ is tangent to $P_{K-1}^i$ at its supporting hyperplane. 
    When $\alpha_{+}^i = 0$,  (\ref{eq:H+}) can be reformulated as $z_{K-1}^T(\theta - g(P_{K-1}^i)) \leq 0$, indicating that the hyperplane $\{\theta: z_{K-1}^T \theta = x_{K}(i) + \bar{w}(i)\}$ passes exactly through the centroid $g(P_{K-1}^i)$.  
    For $\alpha_{+}^i \in (-1,0)$, it's guaranteed that $H_{+}^i \cap P_{K-1}^i \neq \emptyset$ and $ H_{+}^i \cap P_{K-1}^i \subset P_{K-1}^i$, which shows that adding ${z_{K-1}, x_{K}^i}$ at time instant $K$ can decrease size of feasible set.
\end{remark}

Based on the above discussion, we propose the following selection criterion for constructing the coreset.

Let $\alpha_0 \in [-1,0)$ be a predefined threshold. 
For each dimension $i \in [n_x]$ and time instant $K$, define the sub-trigger $\gamma_{K,i}\in \{0,1\}$ as 
\begin{equation}\label{eq:subtrigger}
    \gamma_{K,i} =1 \iff  \max(\alpha_+^i, \alpha_-^i) \geq \alpha_0,
\end{equation}
where $\alpha_{+}^i,\alpha_{-}^i$ are determined in (\ref{eq:alpha+}) and (\ref{eq:alpha-}) respectively. 
Let $\vee$ denotes the logical OR, the binary trigger $\gamma_{K}\in \{0,1\}$ is defined as :
\begin{equation}\label{eq:trigger}
    \gamma_K = 1 \iff \vee_{i=1}^{n_x} \gamma_{K,i} = 1.
\end{equation}
The coreset is constructed recursively according to 
\begin{equation}
    \label{eq:triggerData}
    \mathcal{D}_{K} =  
    \begin{cases}
        \mathcal{D}_{K-1} \cup \{z_{K-1},x_{K}\} & \text{if }\gamma_K = 1,\\
        \mathcal{D}_{K-1} & \text{if }\gamma_K = 0.
    \end{cases}
\end{equation}

The trigger $\gamma_K = 1$ implies that there exists $i \in [n_x]$ such that a newly constructed half-space constraints determined by $z_{K-1}$ and $x_{K}^i$ has a \rev{normalized offset} $\alpha^i$, with respect to $P_{K-1}^i$, which is larger than $\alpha_0$. 
\rev{The choice of $\alpha_0$ determines the strictness of the data selection.}
\rev{The boundary case $\alpha_0 = -1$ corresponds to a baseline strategy where any constraint intersecting the current feasible set triggers an update.}
\rev{In contrast, setting $\alpha_0 \in (-1, 0)$ selects only data that significantly tighten the feasible set.}

\subsection{Online Learning Framework and Implementation}
\label{sec:online_learning_framework}
\begin{revBlock}
In this section we introduce the complete online learning procedure. 
The schematic diagram is presented in Fig.~\ref{fig:schematic_diagram}. 
\end{revBlock}

\begin{figure}[htbp]
    \centering
    \includegraphics[width=.45\textwidth]{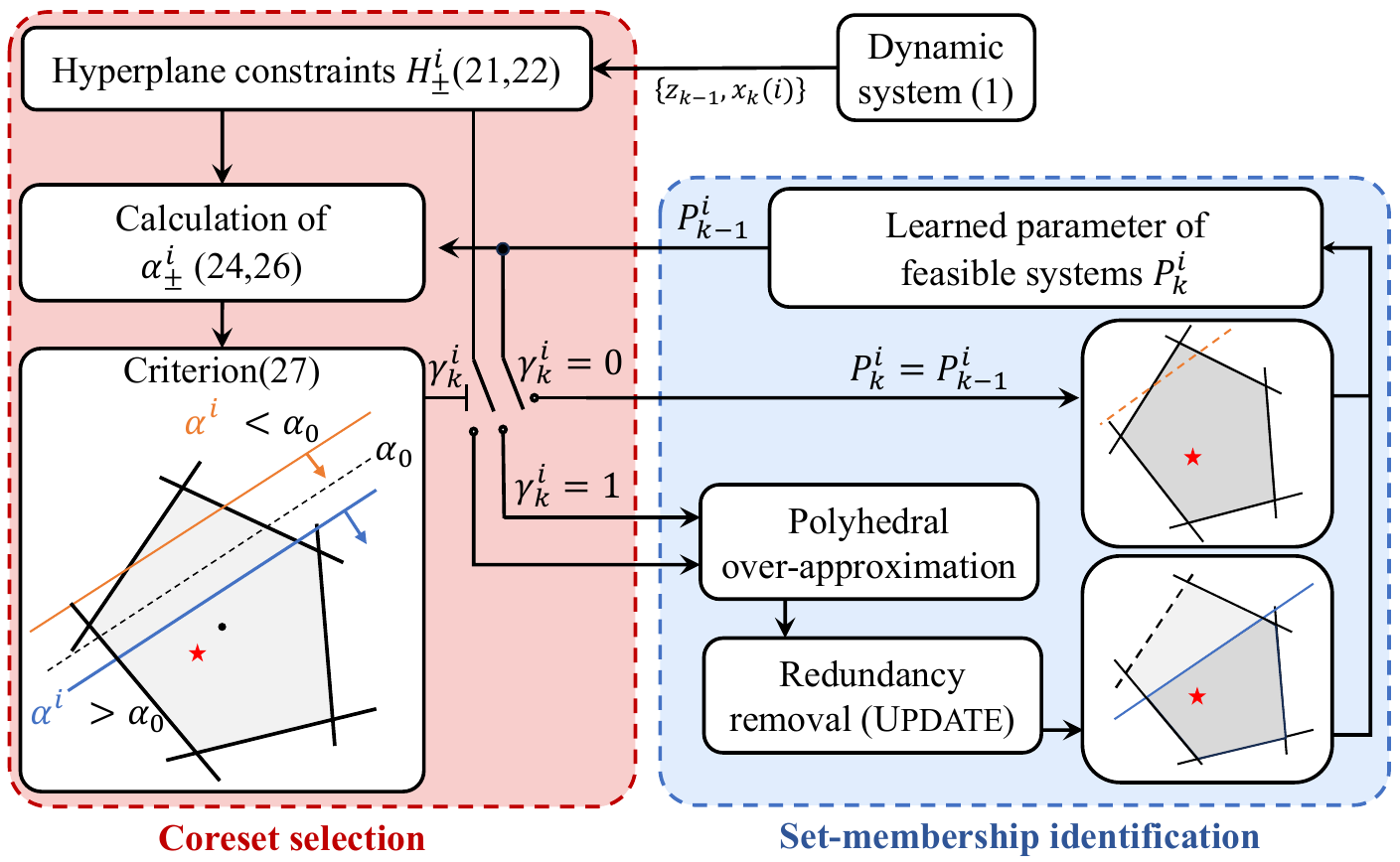}%
    \caption{Schematic diagram of the proposed online learning framework.}
    \label{fig:schematic_diagram}
\end{figure}

\begin{revBlockB}
Based on the proposed polyhedral representation in Section~\ref{subsection:representation} and selection mechanism in Section~\ref{subsection:criterion}, the overall online learning can be designed and summarized in Algorithm~\ref{algorithm:OnlineLearning}.
At time $k$, for the $i$-th polyhedron, we first construct the candidate half-spaces $H_\pm^i$ from the incoming sample $\{z_{k-1},x_k(i)\}$ via~\eqref{eq:H+}--\eqref{eq:H-}~(Line~\ref{line:Hpm}).
We then compute the normalized offsets $\alpha_\pm^i$ according to~\eqref{eq:alpha+} and \eqref{eq:alpha-} (Line~\ref{line:alpha}), which quantify how $H_\pm^i$ would cut $P_{k-1}^i$.
The selection criterion~\eqref{eq:subtrigger} determines $\gamma_{k,i}$, and the global trigger $\gamma_k$ is obtained via~\eqref{eq:trigger}~(Line~\ref{line:trigger}).
If $\gamma_k=1$, only the polyhedra with $\gamma_{k,i} = 1$ are updated by intersecting with $H_+^i$ and $H_-^i$~(Line~\ref{line:append}).
The representation is then updated by removing redundant constraints and recomputing the centroid, denoted by \textsc{Update}$(\cdot)$ in Line~\ref{line:reduce}.
Otherwise $P_{k}^i = P_{k-1}^i$ and the feasible set remains unchanged (Line~\ref{line:equiv}).
Finally, the stacked feasible set $\widehat{\Theta}_k = \Phi(P_k^1,\dots,P_k^{n_x})$
is formed from the row-wise polyhedra.
\end{revBlockB}

\begin{algorithm} 
    \renewcommand{\algorithmicrequire}{\textbf{Input:}}
    \renewcommand{\algorithmicensure}{\textbf{Output:}}
    
    \caption{Online Learning}    \label{algorithm:OnlineLearning}
    \begin{algorithmic}[1]
    \REQUIRE Initial polyhedral representations $\{P_0^i\}_{i=1}^{n_x}$ and triggering threshold $\alpha_0$.
    \ENSURE Feasible set $\widehat{\Theta}_k$ for each $k$.
    \FOR{$k=1,2,\dots$}
        \FOR{$i = 1,\dots,n_x$}
            \STATE Form half-spaces $H_{+}^i$ and $H_{-}^i$ via \eqref{eq:H+} and \eqref{eq:H-}.   \label{line:Hpm}      
            \STATE Compute offsets $\alpha_{\pm}^i$ via (\ref{eq:alpha+}) and (\ref{eq:alpha-}) using $P_{k-1}^i$ and  $g(P_{k-1}^i)$.\label{line:alpha}
            \STATE Determine $\gamma_{k,i}$ by (\ref{eq:subtrigger}) with triggering threshold $\alpha_0$.\label{line:sub_trigger}
        \ENDFOR
        \STATE Calculate $\gamma_k$ by (\ref{eq:trigger}) and get $\mathcal{D}_k$ by (\ref{eq:triggerData}).\label{line:trigger}
        \IF{$\gamma_k = 1$}
            \FOR{$i = 1,\dots,n_x$}
                \IF{$\gamma_{k,i} = 1$}
                    \STATE $\overline{P}_k^i \gets P_{k-1}^i \cap H_{+}^i \cap H_{-}^i$ \label{line:append}
                    \STATE $P_k^i,g(P_k^i) \gets \textsc{Update}(\overline{P}_k^i)$ \label{line:reduce}
                \ELSE
                    \STATE $P_k^i \gets P_{k-1}^i$ \label{line:equiv}
                \ENDIF
            \ENDFOR 
            \STATE $\widehat{\Theta}_k \gets \Phi(P_{k}^1, \dots, P_{k}^{n_x})$.
        \ELSE 
            \STATE $\widehat{\Theta}_k \gets \widehat{\Theta}_{k-1}$.
        \ENDIF
    \ENDFOR
    \end{algorithmic}
\end{algorithm}

\begin{revBlockE}
\subsubsection*{Implementation of \textsc{Update}}
\label{subsection:reduction}
The practical realization of the \textsc{Update} function is determined by the tractability of vertex enumeration. 
When the enumeration of vertices is tractable~\cite{avis1995good}, the Double Description (DD) method~\cite{fukuda1995double} can be used.
By maintaining both $\mathcal{V}$ and $\mathcal{H}$ representations, redundancy can be efficiently identified by checking the adjacency of vertices and the existence of active points on each hyperplane. 
Additionally, the centroid $g(P_k^i)$ can be obtained exactly via standard simplex decomposition, with cost dominated by the DD update.

Conversely, when vertex enumeration becomes intractable, redundancy can be identified via  linear programming (LP).
Let $h_0 = \{x : h^T x \leq c\}$ be a constraint in polyhedral representation $P_{\mathcal{H}}(H_0,C_0)$, and let $\bar{P}$ denote the polyhedron formed by removing $h_0$. 
Then, the constraint $h_0$ is a redundant constraint if and only if: $\min_{x \in \bar{P}} (-h^T x + c) \geq 0$.
Moreover, the centroid $g(P_k^i)$ can be approximated using standard randomized schemes, such as the Hit-and-Run sampler~\cite{gilks2001following,del2006sequential}. 
This approximation introduces an error $\Delta g$, resulting in a deviation of $\alpha_{\pm}^i$ bounded by $\vert \Delta \alpha\vert \leq r_k \left\lVert \Delta g\right\rVert_2  + o(\left\lVert \Delta g\right\rVert_2)$, where $r_k = \left\lVert z_{k-1}\right\rVert_2/\widetilde{h}_{P_{k-1}^i}(z_{k-1})$.  
Crucially, as this perturbation affects only the triggering decision rather than the parameter update itself, the consistency of the feasible set is preserved.
Consequently, the selection mechanism under approximation error can be interpreted as operating with a perturbed threshold $\alpha_0'=\alpha_0-\Delta\alpha$, ensuring that all selected half-spaces remain sufficiently informative relative to this bound.
\end{revBlockE}

\section{Convergence Analysis}
\label{section:convergenceAnalysis}
\rev{This section establishes the convergence guarantees and statistical properties of the proposed online coreset selection method under following conditions on the regressor sequence and the disturbance.}

\begin{definition}[Persistent excitation]\label{def:PE}
A sequence $\{z_k\}_{k\ge 0}$ is said to be
\emph{$(\beta,N_u,b_z)$-persistently exciting} (abbrev. $(\beta,N_u,b_z)$-PE) if there exist
constants $\beta>0$, $N_u\in\mathbb{Z}_{>0}$, and $b_z>0$ such that $\|z_k\|_2 \le b_z$, and for all $k_0\ge 0$,
\begin{equation}\label{eq:PE}
\frac{1}{N_u}\sum_{k=k_0+1}^{k_0+N_u} z_k z_k^\top \succeq \beta^2 I.
\end{equation}
\end{definition}

\begin{definition}[Tightness of the bound]\label{def:tightness}
Let $\xi$ be a scalar random variable supported on $[-\bar{\xi},\bar{\xi}]$ with marginal measure $\mathbb{P}_\xi$.
For any $\epsilon>0$, define $\mathfrak{B}_{\bar{\xi}}(\epsilon)
\triangleq [-\bar{\xi},-\bar{\xi}+\epsilon)\cup(\bar{\xi}-\epsilon,\bar{\xi}]$.
We say that the bound $\bar{\xi}$ is \emph{tight} with respect to functions $q,Q:\mathbb{R}_{\ge 0}\to[0,1]$ if $q(0)=Q(0)=0$, $q$ and $Q$ are non-decreasing, $q(\epsilon)>0$ for all $\epsilon>0$, and 
\begin{equation}\label{eq:coordinate_2sided}
    q(\epsilon)\ \le\ \mathbb{P}_\xi\!\left(\xi\in \mathfrak{B}_{\bar{\xi}}(\epsilon)\right)\ \le\ Q(\epsilon).
\end{equation}
\end{definition}

\begin{assumption}\label{as:PE}
The regressor sequence $\{z_k\}_{k\ge 0}$ is $(\beta,N_u,b_z)$-PE in the sense of
Definition~\ref{def:PE}.
\end{assumption}

\begin{assumption}\label{as:CDD}
There exist functions $q,Q:\mathbb{R}_{\ge 0}\to[0,1]$ such that, for each $i \in [n_x]$, the marginal distribution $\mathbb{P}_i$ of $w(i)$ is supported on $[-\bar w(i),\bar w(i)]$, and the bound $\bar w(i)$ is tight with respect to $(q,Q)$ in the sense of Definition~\ref{def:tightness}
(with $\bar{\xi}=\bar w(i)$), i.e., $q(\epsilon)\ \le\ \mathbb{P}_i\!\left(w(i)\in  \mathfrak{B}_{\bar{w}(i)}(\epsilon) \right)\ \le\ Q(\epsilon)$.
\end{assumption}

\begin{remark}
    \label{remark:Assumption34}
    Assumptions~\ref{as:PE}--\ref{as:CDD} are consistent with standard assumptions commonly used in the literature (e.g., \cite{bai1998convergence,li2024learning,akccay2004size}).
    \rev{The boundedness condition ${||z_k||}_2 \leq b_z$ also accommodates unstable systems as long as bounded regressors are ensured, for example through stabilization or data collection over multiple bounded trajectories.} 
    Assumption~\ref{as:CDD} is \revE{an element-wise} boundary-visiting condition adapted to the disturbance bound $\bar{w}$.
    Compared with global boundary-visiting condition imposed on neighborhoods of the boundary $\partial \mathcal{W}$, it is less restrictive \revE{since} it only requires non-vanishing probability near the two endpoints of each interval $[-\bar w(i),\bar w(i)]$.
    \revE{Equation~\eqref{eq:coordinate_2sided} further includes an upper-bound function $Q(\cdot)$, which quantifies the extent to which $\mathbb{P}_i$ may concentrate near the boundary points $\{-\bar{w}(i), \bar{w}(i)\}$. This inclusion is non-restrictive: since $\mathbb{P}_i$ is a probability measure, such an upper bound always exists.}
\end{remark}

This section is structured as follows. 
\rev{We first establish convergence guarantees for the data-driven feasible set under a fixed threshold $\alpha_0\in[-1,0)$, showing that the worst-case volume converges to zero almost surely.}
\rev{We then quantify how disturbance-bound mismatch perturbs the limiting feasible set, and finally bound the expected cumulative number of selected data points.}

\subsection{Feasible Set Convergence Guarantees}
\label{section:measureContraction}
\rev{Before introducing the main results in this section, we first introduce some lemmas that are required to establish the convergence of feasible set under coreset.}
Under Assumption~\ref{as:PE}, we have the  following result.
\begin{lemma}
    \label{lemma:gamma_z_lower_bound}
    Let Assumption~\ref{as:PE} be satisfied. For all $\vartheta  \in \mathbb{R}^{n_z}$ with $\left\lVert \vartheta  \right\rVert_2 = \delta > 0$, and for any $k_0 > 0$, it holds that
    \begin{equation}
        \label{eq:maxGammaz}
        \max_{k \in [k_0+1:\,k_0 + N_u]} |\vartheta ^T z_k| \geq \beta \delta.
    \end{equation}
\end{lemma}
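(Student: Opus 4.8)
The plan is to reduce the statement to a scalar inequality via the quadratic form in $\vartheta$, then apply a standard max-versus-average argument. First I would left- and right-multiply the persistent-excitation inequality~\eqref{eq:PE} by $\vartheta^T$ and $\vartheta$, respectively. Since $\vartheta^T (z_k z_k^T) \vartheta = (\vartheta^T z_k)^2 = |\vartheta^T z_k|^2$, the matrix inequality collapses into the scalar lower bound
\[
    \frac{1}{N_u}\sum_{k=k_0+1}^{k_0+N_u} |\vartheta^T z_k|^2 \;\geq\; \beta^2 \Vert \vartheta \Vert_2^2 \;=\; \beta^2 \delta^2 ,
\]
where the last equality uses the hypothesis $\Vert \vartheta \Vert_2 = \delta$.

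Next I would invoke the elementary fact that an average of nonnegative numbers cannot exceed their maximum. Writing $M \triangleq \max_{k \in [k_0+1:\,k_0+N_u]} |\vartheta^T z_k|$, every summand obeys $|\vartheta^T z_k|^2 \leq M^2$, so the left-hand side of the displayed inequality is bounded above by $M^2$. Chaining the two bounds gives $M^2 \geq \beta^2 \delta^2$, and since both quantities are nonnegative, taking square roots yields $M \geq \beta \delta$, which is precisely~\eqref{eq:maxGammaz}.

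There is no substantial obstacle in this argument; its entire content is the decomposition $\vartheta^T z_k z_k^T \vartheta = |\vartheta^T z_k|^2$ together with the max-dominates-average inequality. The only points worth verifying are bookkeeping: that the window $[k_0+1:\,k_0+N_u]$ in the claim coincides exactly with the summation range in Assumption~\ref{as:PE}, so that the same maximum $M$ uniformly controls every summand, and that the quadratic form preserves the positive semi-definite ordering under congruence by $\vartheta$. Both are immediate. Note also that the uniform bound $\Vert z_k \Vert_2 \leq b_z$ from Assumption~\ref{as:PE} plays no role here and is not needed for this lemma.
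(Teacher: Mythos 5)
Your proof is correct and takes essentially the same route as the paper: pre- and post-multiplying the PE inequality \eqref{eq:PE} by $\vartheta^T$ and $\vartheta$ to obtain $\frac{1}{N_u}\sum_{k}|\vartheta^T z_k|^2 \geq \beta^2\delta^2$, then concluding via the fact that the maximum dominates the average (the paper phrases this as a pigeonhole argument, and incidentally contains a typo there, writing $\alpha_0\delta$ where $\beta\delta$ is meant). Your added remarks on the summation window and on $b_z$ being unnecessary are accurate but do not change the argument.
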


\begin{proof}
    Pre- and post-multiplying both sides of (\ref{eq:PE}) by $\vartheta^T$ and $\vartheta$, and using $\|\vartheta\|_2 = \delta$, we obtain $\frac{1}{N_u} \sum_{k = k_0 + 1}^{k_0 + N_u} |\vartheta^T z_k|^2 \geq \beta^2 \delta^2$.
    \rev{By the pigeonhole principle, there exists at least one $k$ such that $|\vartheta^T z_k| \geq \beta \delta$, which implies (\ref{eq:maxGammaz}).}
\end{proof}

Next, we introduce a generalization of Gr\"unbaum's Inequality proposed in \cite{letwin2024generalization}.

\begin{lemma}
    \label{lemma:GrunbaumInequality}
    Let $P \subset \mathbb{R}^{n_z}$ be a convex body with centroid at origin. Let $\alpha \in (-1,0)$ and $\xi \in S^{n_z -1}$. Consider the half-space 
    \begin{equation}
        \label{eq:H_alpha^+}
        H_{\alpha}^+ = \{x \in \mathbb{R}^{n_z}: \langle x,\xi \rangle \geq \alpha h_P(-\xi) \}.
    \end{equation} 
    Then 
    \begin{equation}
        \frac{\mu({P \cap H_\alpha^+})}{\mu{(P)}} \leq C(\alpha,n_z),
    \end{equation}
    where 
    \begin{equation}\label{eq:Grunbaum_C}
        C(\alpha,n_z) = 1 - \left(\frac{n_z(\alpha + 1)}{n_z + 1}\right)^{n_z}.
    \end{equation}
\end{lemma}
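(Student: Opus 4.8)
The plan is to follow the strategy behind the classical Gr\"unbaum inequality (the case $\alpha = 0$): reduce the $n_z$-dimensional claim to a one-dimensional inequality about the marginal of $P$ along $\xi$, identify the extremal body as a cone, and evaluate the ratio for that cone explicitly. For the reduction, write $t = \langle x, \xi\rangle$ and let $f(t) = \mu_{n_z-1}(P \cap \{x : \langle x, \xi\rangle = t\})$ be the slice volume, supported on $[a,b]$ with $a = -h_P(-\xi) < 0$ and $b = h_P(\xi) > 0$. By Brunn's theorem (a consequence of Brunn--Minkowski), $f^{1/(n_z-1)}$ is concave on $[a,b]$, and since the centroid of $P$ is the origin we have the balance condition $\int_a^b t\, f(t)\, \dd{t} = 0$. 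The target ratio then reduces to
\begin{equation*}
\frac{\mu(P \cap H_\alpha^+)}{\mu(P)} = \frac{\int_{s}^{b} f(t)\, \dd{t}}{\int_{a}^{b} f(t)\, \dd{t}}, \qquad s = \alpha\, h_P(-\xi) = -\alpha a \in (a,0),
\end{equation*}
so it suffices to upper-bound the fraction of mass lying above the level $s$.

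The core of the argument is to compare $f$ with the marginal $f_c(t) \propto (t-a)^{n_z-1}$ of a cone whose apex sits at the lower endpoint $a$ (so $f_c$ and $f$ share the same value $h_P(-\xi)$) and whose centroid is also the origin; the centroid condition fixes its upper endpoint at $b_c = -a/n_z$, and the Weyl-type inequality~(\ref{eq:h_p(pmZ)}) guarantees $b_c \le b$. Normalizing both profiles to unit mass, the difference $D = f - f_c$ satisfies $\int D = 0$ and $\int t\,D = 0$; moreover, since $f_c^{1/(n_z-1)}$ is affine and $f^{1/(n_z-1)}$ is concave, their difference is concave, which together with $f_c(a)=0 \le f(a)$ and $b_c \le b$ forces the sign pattern of $D$ to be $+,-,+$ (exactly two sign changes). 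The two vanishing moments then constrain the tail integral $R(s) = \int_s^b D$, which starts and ends at zero and is unimodal in the prescribed manner, and a sign-regularity (Chebyshev-system) argument shows $R(s) \le 0$ throughout $(a,0)$; that is, the cone dominates $f$ in upper mass for every admissible threshold. \emph{This step is the main obstacle}: making the two-sign-change structure rigorous (including the pointed-body edge case $f(a)=0$ and the bookkeeping near $b_c$) and converting the two moment conditions into the tail inequality is the technical heart of the proof, and is precisely where the extremality of the cone --- equivalently the classical Gr\"unbaum comparison at the centroid level $s=0$ --- is genuinely invoked.

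Finally, I would evaluate the ratio for the extremal cone. Setting $L = b_c - a$, the centroid condition gives $a = -\tfrac{n_z}{n_z+1}L$ and hence $s - a = -a(\alpha+1) = \tfrac{n_z}{n_z+1}L(\alpha+1)$, so a direct integration yields
\begin{equation*}
\frac{\int_{s}^{b_c}(t-a)^{n_z-1}\, \dd{t}}{\int_{a}^{b_c}(t-a)^{n_z-1}\, \dd{t}} = 1 - \left(\frac{s-a}{L}\right)^{n_z} = 1 - \left(\frac{n_z(\alpha+1)}{n_z+1}\right)^{n_z} = C(\alpha,n_z).
\end{equation*}
Combining this identity with the extremality established in the previous step gives $\mu(P \cap H_\alpha^+)/\mu(P) \le C(\alpha,n_z)$, as claimed.
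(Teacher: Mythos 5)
First, a point of reference: the paper itself does not prove this lemma at all --- it is imported directly from the cited work \cite{letwin2024generalization} --- so your attempt is being judged against what a complete proof would require, not against an in-paper argument. Your skeleton is the right one: the Fubini/Brunn reduction to a one-dimensional profile $f$ with $f^{1/(n_z-1)}$ concave, the identification of the cut level $s=-\alpha a\in(a,0)$, the use of inequality~(\ref{eq:h_p(pmZ)}) to guarantee $b_c=-a/n_z\le b$, and the closing computation showing the apex-at-$a$ cone attains exactly $C(\alpha,n_z)$ are all correct.

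However, the central step is a genuine gap, and the specific mechanism you propose for it fails as stated. You claim that the sign pattern $+,-,+$ of $D=f-f_c$ together with the two vanishing moments $\int D=0$ and $\int t\,D=0$ forces the tail $R(s)=\int_s^b D\le 0$ on $(a,0)$ via a Chebyshev-system argument. That implication is false: those hypotheses only say that $R$ vanishes at $a$ and $b$, has zero integral, and has a decreasing--increasing--decreasing shape, hence a single zero crossing $t^*$ with $R\le 0$ on $[a,t^*]$ and $R\ge 0$ on $[t^*,b]$; nothing in them locates $t^*$ relative to the origin. Concretely, a signed function made of a large $+$ bump and a slightly larger $-$ bump both packed near $a$, balanced by a small $+$ bump near $b$, satisfies both moment conditions and the $+,-,+$ pattern yet has $R(s)>0$ for every $s$ in the middle of the interval, so sign-regularity alone cannot close the argument. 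What actually completes the proof is the anchor $R(0)\le 0$: since your own computation shows the cone's tail at $s=0$ equals $1-\left(n_z/(n_z+1)\right)^{n_z}$, the statement $R(0)\le 0$ is precisely the classical one-dimensional Gr\"unbaum inequality $\int_0^b f\le \left(1-\left(n_z/(n_z+1)\right)^{n_z}\right)\int_a^b f$, and once this is established the single-crossing structure of $R$ immediately propagates $R\le 0$ to all of $(a,0]$ --- at which point the moment conditions enter only through the classical inequality itself, not through any abstract moment-balancing argument. You gesture at exactly this ("the classical Gr\"unbaum comparison at the centroid level $s=0$ is genuinely invoked") but explicitly defer it as "the main obstacle," so the heart of the lemma is asserted rather than proved; a complete write-up must either prove or cite the one-dimensional Gr\"unbaum inequality and then replace the moment-based tail argument with the anchoring-plus-single-crossing argument above.
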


This lemma allows the hyperplane to be shifted by a factor $\alpha \in (-1, 0)$ along a given direction, thus accommodating more flexible partitioning.

\begin{remark}
    In the context of the proposed coreset selection strategy, this lemma quantifies the reduction in the volume of the feasible polyhedral set after each data selection. 
    The parameter $\alpha_{+}^i$ and $\alpha_{-}^i$ in Proposition~\ref{proposition:SupportingFunctionRepresentation} correspond to the parameter $\alpha$ in (\ref{eq:H_alpha^+}).
    Therefore, the threshold condition (\ref{eq:subtrigger}) can be equivalently understood as requiring that each selection is triggered only when the feasible set undergoes a guaranteed relative volume reduction, characterized by the factor $C(\alpha_0, n_z)$.
    Proposition~\ref{proposition:SupportingFunctionRepresentation} and Lemma~\ref{lemma:GrunbaumInequality} together offer a geometric explanation of the proposed online coreset selection strategy.
\end{remark}

\begin{revBlockE}
    To quantify the uncertainty across the $n_x$ state equations, we define the worst-case volume $\mu_{\infty}(\cdot)$ as 
    \begin{equation}\label{eq:maxMeasure}
        \mu_{\infty}(\Phi(P_1,\dots, P_{n_x})) \triangleq \max_{i \in [n_x]} \mu(P_{i}),
    \end{equation}
    where $\Phi$ is the row-stacking map defined in (\ref{eq:Phi}), each $P_i$ is a polyhedral set in $\mathbb{R}^{n_z}$ and $\mu(\cdot)$ denotes the standard Lebesgue measure on $\mathbb{R}^{n_z}$, as previously used in the definition of the centroid in (\ref{eq:centroid}). 
\end{revBlockE}

Let $n^i(K)$ denote the cumulative number of selected data points caused by the $i$-th component up to time $K$, that is, 
\begin{equation}
    \label{eq:def_niK}
    n^i(K) = \sum_{k=1}^K \mathbf{1}_{\{\gamma_{k,i}=1\}},
\end{equation}
where $\mathbf{1}_{\{\cdot\}}$ denotes the indicator function.

Denote by $K^i(n)$ the time step at which the $i$-th component triggers for the $n$-th time, with the convention $K^i(0) = 0$.
Further, for any $K \geq 0$, define $\tau^i(K)$ as the first time after $K$ when the $i$-th component triggers selection, i.e.,
\begin{equation}
    \tau^i(K) \triangleq \min \{ k  > K \mid \gamma_{k ,i} = 1 \}.
\end{equation}

For a bounded polyhedral set $P$ and an inner point $\theta \in P$, we define the radius $\mathfrak{R} (P,\theta)$ as 
\begin{equation}
    \label{eq:radius}
    \mathfrak{R} (P,\theta) \triangleq \sup_{\theta' \in P} \Vert \theta - \theta' \Vert_2.
\end{equation}

Now we are in a position to state the theorem for convergence for volume of feasible set under \rev{general selection threshold}.

\begin{theorem}
    \label{Theorem:ThresholdTriggerConvergence}
    Consider system~(\ref{eq:system}) with selection threshold $\alpha_0 \in (-1,0)$. Let Assumptions~\ref{as:disturbance},\ref{as:initialSet},\ref{as:PE},\ref{as:CDD} be satisfied, and $q(\epsilon)$ be a lower bound function in (\ref{eq:coordinate_2sided}). The proposed estimator $\widehat{\Theta}_K$ has the following convergence properties.

    \begin{itemize}
        \item[(i)]
        The worst-case volume $\mu_{\infty}(\widehat{\Theta}_K)$ is bounded as:
    \rev{\begin{equation}
        \label{eq:ThresholdTrigger(i)}
        \mu_{\infty}(\widehat{\Theta}_K) \leq \max_{i \in [n_x]}\{\mu(\theta_0^i) {C(\alpha_0,n_z)}^{n^i(K)}\},
    \end{equation}}
    where $C(\alpha_0,n_z)$, defined in (\ref{eq:Grunbaum_C}), is independent of $K$.            

    \item[(ii)] For all $i \in [n_x]$, with probability no less than $1-p_{\varepsilon}$, the following holds:
    \begin{equation}
        \label{eq:ThresholdTrigger(ii)}
        \frac{\tau^i(K) - K}{N_u} \leq { \ln(p_{\varepsilon})} / {\ln(1 - q\left(\frac{(-\alpha_0)\beta}{n_z}\delta \right))},
    \end{equation}
    where $p_{\varepsilon}>0$, $\delta = \mathfrak{R}(P_{K}^i,\theta_*^i)$ and $\beta$ are defined in (\ref{eq:PE}).

    \item[(iii)] The following asymptotic convergence holds almost surely: 
    \begin{equation}\label{eq:ThresholdTrigger(iii)}
        \lim_{K \to \infty} \mu_{\infty}(\widehat{\Theta}_K) = 0.
    \end{equation}
    \end{itemize}
\end{theorem}

\begin{proof}
For each triggering event associated with the $i$-th polyhedral set, by Lemma~\ref{lemma:GrunbaumInequality}, the volume of this feasible polytope is reduced by at least a factor of $C(\alpha_0,n_z)$. Since $n^i(K)$ denotes the number of triggering events for the $i$-th sub-trigger up to time $K$, it follows that 
    \rev{\begin{equation} \label{eq:proof_mu}
        \mu(P_K^i) \leq \mu(\theta_0^i) C(\alpha_0,n_z)^{n^i(K)}.
    \end{equation}} 
    Substituting the above bound in (\ref{eq:maxMeasure}) yields the desired inequality in statement (i).
    
    Fix $i \in [n_x]$ and $K > 0$. For $K_2 \geq K_1 > K$, we define event $\mathcal{E}_{[K_1:K_2]}$ as $\mathcal{E}_{[K_1:K_2]} \triangleq \{ \forall k \in [K_1:K_2], \gamma_{k,i} = 0\}$.
    To prove statement (ii), we need to derive an upper bound for $\mathbb{P}(\mathcal{E}_{[K+1:K+\Delta K]}) $.

    Consider $k \in [K+1:K+\Delta K]$. By (\ref{eq:subtrigger}), $\gamma_{k,i} = 1$ if and only if $\alpha_{+}^i \geq \alpha_0$ or $\alpha_{-}^i \geq \alpha_0$. By (\ref{eq:system}), we have $x_{k}(i) = z_{k-1}^T \theta_*^i + w_{k-1}(i)$. 
    Then, $\alpha_{+}^i \geq \alpha_0$ can be rewritten as $w_{k-1}(i) \leq -\bar{w}(i) + \psi_{k-1}^{+}$,
    where 
    \begin{equation}
        \label{eq:psi+}
        \psi_{k-1}^{+}\!=\!\left( (-\alpha_0)\widetilde{h}_{P_{k-1}^i}(z_{k-1}) + z_{k-1}^T (g(P_{k-1}^i) - \theta_*^i) \right). 
    \end{equation}
    Similarly, the condition $\alpha_{-}^i \geq \alpha_0$ holds if and only if  $w_{k-1}(i) \geq \bar{w}(i) - \psi_{k-1}^{-}$,
    where 
    \begin{equation}
        \label{eq:psi-}
        \!\psi_{k-1}^{-}\!=\!\left( (-\alpha_0)\widetilde{h}_{P_{k-1}^i}(-z_{k-1})\!-\!z_{k-1}^T (g(P_{k-1}^i) - \theta_*^i) \right). 
    \end{equation}
    Here, $\psi_{k-1}^{+}$ and $\psi_{k-1}^{-}$ are  determined by the available information at time $k-1$ (i.e., $z_{k-1}$ and $P_{k-1}^i$), and is independent of the realization of $w_{k-1}$. Also, we denote $\psi_{k} \triangleq \max\{\psi_{k}^+, \psi_{k}^{-}\}$.

    Now we assume that the event $\mathcal{E}_{[K+1:K+\Delta K]}$ holds, i.e., $\gamma_{k,i}=0$ for all $k \in [K+1:K+\Delta K]$. Then the feasible set $P_K^i$ remains unchanged during this time interval. Let $\theta^i_K = \arg\sup_{\theta \in P_K^i} \Vert \theta_*^i - \theta \Vert_2$ and define the vector $\vartheta \triangleq \theta^i_K - \theta_*^i$, we have $ \left\lVert \vartheta\right\rVert_2 = \delta $. 

    Under Assumption~\ref{as:PE}, with $\alpha_0 \in (-1,0)$,  we show that $\forall k_0 \in [K:K+\Delta K - N_u]$, the following expression holds:
    \begin{equation}
        \label{eq:max_psi}
        \max_{k \in [k_0+1:k_0 +N_u]} \psi_{k} \geq \frac{(-\alpha_0) \beta}{n_z} \delta.
    \end{equation}

    By Lemma~\ref{lemma:gamma_z_lower_bound}, denote $k' = {\arg\max}_{k \in [k_0+1:\,k_0 + N_u]} |\vartheta^T z_k|$.
    Then, $\vert \vartheta^T z_{k'} \vert \geq \beta \delta$. We first consider the case $z_{k'}^T(g(P_K^i)-\theta_*^i) \geq 0$, and distinguish two sub-cases based on the sign of $\vartheta^T z_k$. (The case for $z_{k'}^T(g(P_K^i)-\theta_*^i) < 0$ will be discussed later in a similar process)

    \begin{itemize}[leftmargin=*]
        \item If $\vartheta^T z_k \geq \beta \delta$, invoking the definition of $\widetilde{h}_{P_{K}^i}(z_{k'})$ in (\ref{eq:centeredSF}) and observing that $\theta_K^i \in P_K^i$ during the interval under consideration, we obtain
        \begin{align*}
            \psi_{k'}^{+} &= (-\alpha_0) \max_{\theta \in P_K^i}{\langle \theta - g(P_K^i),z_{k'}  \rangle} + z_{k'}^T(g(P_{K}^i)-\theta_*^i) \\
            &= (-\alpha_0)\left( \max_{\theta \in P_K^i}{\langle \theta - \theta_*^i,z_{k'}  \rangle} - z_{k'}^T( g(P_K^i)-\theta_*^i)  \right)\\
            &+ z_{k'}^T( g(P_K^i)-\theta_*^i) \\ 
            &\geq (-\alpha_0) \langle \vartheta,z_{k'} \rangle + (1+\alpha_0) z_{k'}^T( g(P_K^i)-\theta_*^i)\\ 
            & \geq \frac{(-\alpha_0) \beta}{n_z} \delta.
        \end{align*} 
        \item If $\vartheta^T z_k \leq -\beta \delta$, by (\ref{eq:h_p(pmZ)}), we have 
        \begin{equation}
            \!\!\!\max_{\theta \in P_K^i}{\langle \theta - g(P_K^i),z_{k'}  \rangle}\!\geq\!\frac{1}{n_z} \max_{\theta \in P_K^i}{\langle \theta - g(P_K^i),-z_{k'}  \rangle}.
        \end{equation}
        Then, $\psi_{k'}^{+}$ can be lower bounded by
        \begin{align*}
            \psi_{k'}^{+} &\geq \frac{(-\alpha_0) }{n_z}\max_{\theta \in P_K^i}{\langle \theta - g(P_K^i),-z_{k'}  \rangle} + z_{k'}^T(g(P_{K}^i)-\theta_*^i) \\ 
            &= \frac{(-\alpha_0) }{n_z}\left( \max_{\theta \in P_K^i}{\langle \theta - \theta_*^i,-z_{k'}  \rangle} + z_{k'}^T( g(P_K^i)-\theta_*^i)  \right)\\
            &+ z_{k'}^T( g(P_K^i)-\theta_*^i) \\ 
            &\geq \frac{(-\alpha_0) }{n_z} \langle \vartheta,-z_{k'} \rangle + \left(1+\frac{(-\alpha_0)}{n_z}\right) z_{k'}^T( g(P_K^i)-\theta_*^i) \\ 
            &\geq \frac{(-\alpha_0) \beta}{n_z} \delta.
        \end{align*}
    \end{itemize}

    In both sub-cases, the desired conclusion holds. Therefore, regardless of the sign of $\vartheta^T z_{k'}$, $\psi_{k'}^{+} \geq \frac{(-\alpha_0) \beta}{n_z} \delta$ follows.

\ifthenelse{\boolean{ArxivVersion}}{
    When $z_{k'}^T(g(P_K^i)-\theta_*^i) < 0$, a similar analysis can be applied to $\psi_{k'}^{-}$ and we have 
    \begin{equation}
        \label{eq:psi-bound}
        \psi_{k'}^{-} \geq \frac{(-\alpha_0) \beta}{n_z} \delta.
    \end{equation}
}
{
    When $z_{k'}^T(g(P_K^i)-\theta_*^i) < 0$, a similar analysis can be applied to $\psi_{k'}^{-}$ and we have $\psi_{k'}^{-} \geq \frac{(-\alpha_0) \beta}{n_z} \delta$.
}
    Thus, in either case, we conclude that $\max\{\psi_{k}^+, \psi_{k}^{-}\} \geq \frac{(-\alpha_0) \beta}{n_z} \delta$, 
    which prove the claim (\ref{eq:max_psi}).

    Consider partitioning the time interval $[K+1:K+\Delta K]$ into a sequence of consecutive segments of length $N_u$. Define for $t = 0, \dots, \Delta T - 1$: $\mathcal{I}_t \triangleq [K+tN_u+1:K+(t+1)N_u]$.
    If $\Delta K$ is divisible by $N_u$, define $\mathcal{I}_{\Delta T} \triangleq \emptyset$; otherwise, let $\mathcal{I}_{\Delta T} \triangleq [K+\Delta T N_u +1: K+\Delta K]$.
    Then, we have $[K+1:K+\Delta K] = \bigcup_{t=0}^{\Delta T - 1} \mathcal{I}_{t} \cup \mathcal{I}_{\Delta T}$.
    With this partition, by law of total probability, $\mathbb{P}( \mathcal{E}_{[K+1:K+\Delta K]} )$ can be written as 
    \begin{equation}
        \label{eq:ProbabilityPartition}
        \mathbb{P} (\mathcal{E}_{[K+1:K+\Delta K]}) = \prod_{t=0}^{\Delta T} \mathbb{P}(\mathcal{E}_{\mathcal{I}_t} \mid {\mathcal{E}_{[K+1:K+tN_u]}}),
    \end{equation}
    where, by convention, we write $\mathbb{P}(\mathcal{E}_{\mathcal{I}_0} \mid \mathcal{E}_{[K+1:K]}) \triangleq \mathbb{P}(\mathcal{E}_{\mathcal{I}_0})$.

    Fix $t \in [0:\Delta T -1]$, by Lemma~\ref{lemma:gamma_z_lower_bound}, there exists $k_t \in \mathcal{I}_t$ such that $\vert\vartheta^T z_{k_t}\vert \geq \beta \delta$. We denote the following interval as $\mathcal{I}_{t,1} \triangleq [K+t N_u + 1:k_t]$, 
    $\mathcal{I}_{t,2} \triangleq [k_t+1:K + (t+1)N_u]$,
    $\mathcal{I}_{t,3} \triangleq [K+1:K+tN_u]$ and 
    $\mathcal{I}_{t,4} \triangleq [K+1:k_t+1]$. 
    Therefore, $\mathbb{P}(\mathcal{E}_{\mathcal{I}_t} \mid {\mathcal{E}_{[K+1:K+tN_u]}})$ can be further written as 
    \begin{equation}
        \label{eq:P123}
        \mathbb{P}(\mathcal{E}_{\mathcal{I}_{t,1}} \mid {\mathcal{E}_{\mathcal{I}_{t,3}}}) 
        \mathbb{P}(\mathcal{E}_{k_t + 1}\mid \mathcal{E}_{[K+1:k_t]})
        \mathbb{P}(\mathcal{E}_{\mathcal{I}_{t,2}} \mid \mathcal{E}_{\mathcal{I}_{t,4}}).
    \end{equation}

    Consider the second term in (\ref{eq:P123}), $\mathcal{E}_{k_t + 1}$ holds if and only if $-\bar{w}(i) + \psi_{k_t}^{+} \leq w_{k_t} \leq \bar{w}(i) - \psi_{k_t}^{-}$.
    With (\ref{eq:max_psi}), $\mathbb{P}(\mathcal{E}_{k_t + 1}\mid \mathcal{E}_{[K+1:k_t]})$ can be written as
    \begin{align*}
        &\mathbb{P}(\mathcal{E}_{k_t + 1} \mid \psi_{k_t}^{+} \geq \frac{(-\alpha_0) \beta}{n_z} \delta ,\mathcal{F}_{k_t} ) \mathbb{P}(\psi_{k_t}^{+} \geq \frac{(-\alpha_0) \beta}{n_z} \delta \mid \mathcal{F}_{k_t})) \\ 
        +  &\mathbb{P}(\mathcal{E}_{k_t + 1} \mid \psi_{k_t}^{+} < \frac{(-\alpha_0) \beta}{n_z} \delta ,\mathcal{F}_{k_t} ) \mathbb{P}(\psi_{k_t}^{+} < \frac{(-\alpha_0) \beta}{n_z} \delta \mid \mathcal{F}_{k_t}))\\ 
        \leq &\left(1-q\left(\frac{(-\alpha_0) \beta}{n_z} \delta\right)\right) \mathbb{P}(\psi_{k_t}^{+} \geq \frac{(-\alpha_0) \beta}{n_z} \delta \mid \mathcal{F}_{k_t}))\\ 
        +  &\left(1-q\left(\frac{(-\alpha_0) \beta}{n_z} \delta\right)\right) \mathbb{P}(\psi_{k_t}^{+} < \frac{(-\alpha_0) \beta}{n_z} \delta \mid \mathcal{F}_{k_t}))\\
        = &\left(1-q\left(\frac{(-\alpha_0) \beta}{n_z} \delta\right)\right).
    \end{align*}

    Therefore, we have 
    \begin{equation}
        \mathbb{P}(\mathcal{E}_{\mathcal{I}_t} \mid {\mathcal{E}_{[K+1:K+tN_u]}}) \leq \left(1-q\left(\frac{(-\alpha_0) \beta}{n_z} \delta\right)\right).
    \end{equation}
    Then, by ($\ref{eq:ProbabilityPartition}$), it follows that:
    \begin{equation}
        \label{eq:probabilityBound}
        \mathbb{P} (\mathcal{E}_{[K+1:K+\Delta K]}) \leq \left(1-q\left(\frac{(-\alpha_0) \beta}{n_z} \delta\right)\right)^{ \Delta T }.
    \end{equation}

    Given any $p_{\varepsilon}>0$, if the right-hand side of (\ref{eq:probabilityBound}) is less than $p_{\varepsilon}$, which holds when $\Delta T$ satisfies 
    \begin{equation}
        \label{eq:DeltaT}
        \Delta T > { \ln(p_{\varepsilon})} / {\ln(1 - q\left(\frac{(-\alpha_0)\beta}{n_z}\delta \right))}.
    \end{equation}
    Inequality~(\ref{eq:DeltaT}) implies that the probability of $\mathcal{E}_{[K+1:K+\Delta K]}$ is no greater than $p_{\varepsilon}$ when $\Delta T$ exceeds the right-hand side of (\ref{eq:DeltaT}). Considering the complement event, with probability no less than $1-p_{\varepsilon}$, (\ref{eq:ThresholdTrigger(ii)}) holds.

    For statement (iii), as $K$ tends to $\infty$, we show that $\forall i \in [n_x]$, $n^i(K)$ tends to $\infty$ almost surely. 
    From previous analysis, the probability that $n^i(K)$ remains unchanged at each time step decays exponentially. 
    This implies that for any fixed $i$, an increment of one occurs for $n^i(K)$ almost surely as $K$ increases. 
    Therefore, as $K \to \infty$, $n^i(K)$ will almost surely increase without bound.
    \rev{Given $0 < C(\alpha_0,n_z) < 1$, this implies that $\lim_{K \to \infty}{C(\alpha_0,n_z)}^{n^i(K)} = 0$ for each $i$, and thus the maximum over $i$ in \eqref{eq:ThresholdTrigger(i)} also tends to $0$.}
\end{proof}

\begin{remark}\label{remark:theorem1}
    Theorem~\ref{Theorem:ThresholdTriggerConvergence} discusses both non-asymptotic and asymptotic convergence of feasible set volume under threshold trigger. 
    \rev{The statement~(i) provides an explicit upper bound on the worst-case volume $\mu_\infty(\widehat{\Theta}_K)$ in terms of the trigger counts $\{n^i(K)\}_{i\in[n_x]}$, where the contraction factor $C(\alpha_0,n_z)$ is independent of $K$.}
    The statement~(ii) characterizes a probabilistic trigger interval for next uncertainty reduction. 
    This interval highlights a trade-off controlled by the threshold $\alpha_0$: choosing $\alpha_0$ closer to $0$ typically enforces a stricter selection condition, which can lead to stronger contraction per trigger as reflected in~\eqref{eq:ThresholdTrigger(i)}, while potentially increasing the selection interval as suggested by~\eqref{eq:ThresholdTrigger(ii)}.
    The statement (iii) shows that the selection criterion~(\ref{eq:subtrigger}) retains sufficiently informative data to ensure $\mu_\infty(\widehat{\Theta}_K)$ converges to zero almost surely as $K\to\infty$.
\end{remark}

Beyond worst-case volume convergence, we also consider \textit{point-wise convergence}, i.e., the exclusion of every $\theta \neq \theta_*^i$ from the feasible set with probability one.
We observe that in the proof of Theorem~\ref{Theorem:ThresholdTriggerConvergence}, the sign of two expression $z_k^T \vartheta $ and $z_{k}^T(g(P_k^i)-\theta_*^i)$ are closely related to the classical arguments used in establishing point-wise convergence. By introducing an additional condition on the signal of two expressions, we can also derive a convergence result in the point-wise sense. The condition is formalized in the following assumption.

\begin{assumption}
    \label{as:StrongerPE}
    Under Assumption~\ref{as:PE}, there exists $N_G > 0$ such that for any $i \in [n_x]$ and $k_0 > 0$, there exists $k \in [k_0+1: k_0+N_G]$ satisfying:
    \begin{equation*}
        \label{eq:sgnPE}
        \vert z_k^T \vartheta \vert \geq \beta \delta \quad \text{and} \quad \text{sgn}(z_k^T \vartheta) = - \text{sgn}(z_{k}^T(g(P_k^i)-\theta_*^i)).
    \end{equation*} 
\end{assumption}

\begin{corollary}
    \label{corollary:convergence}
    Under assumptions of Theorem~\ref{Theorem:ThresholdTriggerConvergence} and Assumption~\ref{as:StrongerPE}, it follows that 
    \begin{equation}
        \label{eq:limThetaK}
        \lim_{K \to \infty} \widehat{\Theta}_K = \Theta_*.
    \end{equation}
\end{corollary}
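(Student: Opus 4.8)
The plan is to upgrade the volume-convergence argument of Theorem~\ref{Theorem:ThresholdTriggerConvergence} into a statement about the circumradius of each component polytope. First I would reduce the set-convergence claim to a per-coordinate radius bound. Since the representation is a product, \(\widehat{\Theta}_K=\Phi(P_K^1,\dots,P_K^{n_x})\), and \(\Theta_*\in\widehat{\Theta}_K\) for every \(K\) with the sets nested and decreasing, the limit \(\lim_{K\to\infty}\widehat{\Theta}_K\) equals \(\bigcap_K\widehat{\Theta}_K=\Phi(\bigcap_K P_K^1,\dots,\bigcap_K P_K^{n_x})\). Hence it suffices to prove, for each fixed \(i\in[n_x]\), that the radius \(\delta_K^i\triangleq\mathfrak{R}(P_K^i,\theta_*^i)\) from~\eqref{eq:radius} tends to \(0\) almost surely; this forces \(\bigcap_K P_K^i=\{\theta_*^i\}\) and therefore~\eqref{eq:limThetaK}. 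Because \(P_K^i\) only shrinks, \(\delta_K^i\) is non-increasing and converges to some limit \(\delta_\infty^i\ge 0\), so the task is to show \(\delta_\infty^i=0\) a.s.

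Next I would argue by contradiction: suppose \(\delta_\infty^i=\delta_0>0\) on an event of positive probability. On this event every polytope \(P_K^i\) contains its farthest point \(\theta_K^i\) with \(\|\theta_K^i-\theta_*^i\|_2\ge\delta_0\), and \(\vartheta=\theta_K^i-\theta_*^i\) is the current worst-case direction. Partitioning time into consecutive windows of length \(N_G\), Assumption~\ref{as:StrongerPE} furnishes in every window a time \(k\) at which \(|z_k^T\vartheta|\ge\beta\delta_0\) together with the sign alignment \(\operatorname{sgn}(z_k^T\vartheta)=-\operatorname{sgn}(z_k^T(g(P_k^i)-\theta_*^i))\). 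I would then reuse the centered-support-function reformulation of the two half-spaces (Proposition~\ref{proposition:SupportingFunctionRepresentation}) and the \(\psi_k^{\pm}\) identities from the proof of Theorem~\ref{Theorem:ThresholdTriggerConvergence} to rewrite both the triggering condition \(\gamma_{k,i}=1\) and the event ``\(\theta_K^i\) is removed by the newly added half-space'' as events on the single scalar \(w_k(i)\). The decisive role of the sign condition is that it places the centroid \(g(P_k^i)\) on the side of \(\theta_*^i\) opposite to \(\theta_K^i\) along \(z_k\); this is exactly what makes the half-space that attains the threshold coincide with the half-space that separates \(\theta_K^i\) from \(\theta_*^i\), so that a disturbance visiting the appropriate face of \(\mathcal{W}\) both activates the trigger and excises \(\theta_K^i\). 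By Lemma~\ref{lemma:boundary-visiting inherits} this boundary-visiting event has probability at least \(q(\eta)>0\) for a suitable \(\eta=\eta(\beta,\delta_0,\alpha_0,n_z)\), uniformly in \(K\); and since the disturbances are i.i.d.\ while \(z_k\) and \(P_k^i\) are determined by the past, these exclusion events across disjoint windows are conditionally independent with a common positive lower bound.

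With this uniform lower bound, the second Borel--Cantelli lemma implies that the farthest point is excised infinitely often almost surely, so \(\delta_K^i\) cannot remain bounded below by \(\delta_0\); this contradicts \(\delta_\infty^i=\delta_0>0\). Therefore \(\delta_\infty^i=0\) a.s.\ for each \(i\), and combining the \(n_x\) coordinates through \(\Phi\) yields \(\bigcap_K\widehat{\Theta}_K=\{\Theta_*\}\), i.e.~\eqref{eq:limThetaK}.

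I expect two points to be the main obstacles. The first is the probability lower bound: showing that under the sign condition the threshold trigger and the separation of \(\theta_K^i\) are \emph{simultaneously} activated with probability bounded away from zero. This is not automatic, because the triggering margin is governed by \(\psi_k^{+}=(-\alpha_0)\widetilde{h}_{P_k^i}(z_k)+z_k^T(g(P_k^i)-\theta_*^i)\), whose sign must be controlled; I anticipate needing the Weyl/Minkowski inequalities~\eqref{eq:h_p(pmZ)} together with a Gr\"unbaum-type bound on the centroid's position relative to \(\theta_*^i\) to guarantee \(\psi_k^{+}>0\) of order \(\beta\delta_0/n_z\). The second, more structural, obstacle is upgrading ``the farthest point is excised infinitely often'' to ``the circumradius vanishes,'' since the worst-case direction \(\vartheta\) drifts with \(K\); here one must exploit that Assumption~\ref{as:StrongerPE} aligns the excitation with whatever the current extremal direction is, so that the adaptively oriented cuts eventually contract \(P_K^i\) toward \(\theta_*^i\) in every direction rather than leaving a persistent low-dimensional sliver of radius \(\delta_0\).
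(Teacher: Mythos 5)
Your proposal aims at a strictly stronger statement than the paper actually proves, and the step you flag as your ``second, more structural, obstacle'' is a genuine gap, not a technicality. The paper interprets \eqref{eq:limThetaK} in a \emph{point-wise, in-probability} sense: it fixes an arbitrary $\theta$ with $\lVert \theta-\theta_*^i\rVert_2=\delta>0$, keeps the direction $\vartheta=\theta-\theta_*^i$ \emph{fixed for all time}, and shows via Assumption~\ref{as:StrongerPE} and the $\psi_k^{\pm}$ case analysis that whenever the aligned excitation time $k$ occurs in a window, the event $\theta\in P_{k+1}^i$ forces $w_k(i)\leq \bar{w}(i)-\tfrac{(-\alpha_0)\beta}{n_z}\delta$; multiplying the resulting conditional bounds across windows of length $N_G$ gives $\mathbb{P}(\theta\in P_K^i)\leq \bigl(1-q\bigl(\tfrac{(-\alpha_0)\beta}{n_z}\delta\bigr)\bigr)^{\lfloor K/N_G\rfloor}\to 0$. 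Because $\theta$ is fixed, no issue of a drifting worst-case direction ever arises. Your argument instead chases the adaptive farthest point $\theta_K^i$, and the contradiction you want does not follow: even if the current farthest point is excised infinitely often, each cut only removes a cap around that particular point, while other regions of $P_K^i$ (including points on the surviving side of the cut, arbitrarily close to the excised one) can keep the circumradius at or above $\delta_0$ forever. Volume going to zero (Theorem~\ref{Theorem:ThresholdTriggerConvergence}(iii)) is compatible with a persistent thin sliver of radius $\delta_0$, so ``excised infinitely often'' plus Borel--Cantelli does not yield $\delta_\infty^i=0$. Closing this hole would require either a quantified radius decrease per excision or a covering-number argument over directions (as in Lemma~\ref{lemma:ConvergencePk} for the immediate trigger), neither of which your sketch supplies.

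There is also a secondary mismatch: even your reduction step assumes the conclusion should be read as $\bigcap_K P_K^i=\{\theta_*^i\}$ almost surely, whereas the paper's proof only delivers, for each \emph{fixed} $\theta\neq\theta_*^i$, that $\mathbb{P}(\theta\in P_K^i)\to 0$; one cannot pass from the latter to the former by a union over the uncountably many $\theta$ without exactly the kind of uniformity/covering argument that is missing. The fix that stays closest to your own machinery is simply to fix $\theta$ in advance (making $\vartheta$ deterministic), run your window decomposition and the sign-condition case analysis on $\psi_k^{\pm}$, and conclude exponential decay of $\mathbb{P}(\theta\in P_K^i)$ --- which is precisely the paper's proof.
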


\ifthenelse{\boolean{ArxivVersion}}{
\begin{proof}
    To prove (\ref{eq:limThetaK}), it suffices to prove the following statement:  For any $i \in [n_x]$, 
    \begin{equation}
        \label{eq:thetaInP_asymptotic}
        \lim_{K \to \infty} \mathbb{P}(\theta \in P_K^i) = 0
    \end{equation}
    for all $\theta \in \mathbb{R}^{n_z}$ with $\left\lVert \theta - \theta_*^i \right\rVert_2   = \delta > 0$.

    Denote $\vartheta = \theta - \theta_*^i$. Let event $\mathcal{E}_k'$ be defined as 
    \begin{equation}
        \mathcal{E}_k' \triangleq \{ w_k(i): -\bar{w}(i) - z_k^T \vartheta \leq  w_k(i) \leq \bar{w}(i) - z_k^T \vartheta \}.
    \end{equation} 
    Moreover, for $S \subseteq \mathbb{Z}_{>0}$, we define $\mathfrak{T}_{S} \triangleq \{k \in S \mid \gamma_{k,i} = 1\}$ 
    as the index set when sub-trigger $\gamma_{k,i}=1$ in set $S$. Let  $\Omega_{S} \triangleq \{ \cap_{k \in \mathfrak{T}_S } \mathcal{E}_k' \}$.
       Then, the probability of $\theta \in P_K^i$ can be written as 
       \begin{align}
           &\mathbb{P}(\theta \in P_K^i) = \mathbb{P}(\Omega_{[K]})\nonumber \\ 
           =&\left(
               \prod_{t=0}^{\Delta T - 1}\mathbb{P}(\Omega_{[tN_G + 1:(t+1)N_G]} \mid \Omega_{[tN_G]})
               \right)
               \mathbb{P}(\Omega_{[(\Delta T) N_G:K]}),\label{eq:thetaInP}
       \end{align}
       where, by convention, we write $\mathbb{P}(\Omega_{[N_G]} \mid \Omega_{0})$ as $\mathbb{P}(\Omega_{[N_G]})$ and $\Delta T = \lfloor \frac{K}{N_G} \rfloor $. 
    
    We assume $k$ to be the index satisfying (\ref{eq:sgnPE}), i.e., $u_{k}$ is designed such that 
    $\vert z_{k}^T \vartheta \vert \geq \beta \delta \quad \text{and} \quad \text{sgn}(z_{k}^T \vartheta) = - \text{sgn}(z_{k}^T(g(P_{k}^i)-\theta_*^i))$. Then, $\Omega_{k}$ implies 
    \begin{equation}
        \label{eq:event_theta_in_P}
        \gamma_{k,i} = 0 \quad  \text{or} \quad (\gamma_{k,i}=1 \text{ and } \mathcal{E}_{k}').
    \end{equation}

    We consider the case $z_{k}^T(g(P_{k}^i)-\theta_*^i) \leq 0$ (The case for $z_{k}^T(g(P_{k}^i)-\theta_*^i) \geq 0$ can be discussed similarly). By the choice of $k$, $z_{k}^T \vartheta \geq \beta \delta > 0$. From (\ref{eq:psi-bound}), we have $\psi_{k}^- \geq \frac{(-\alpha_0) \beta}{n_z} \delta$. 
    We consider the two sub-cases based on value of $\psi_{k}^-$ and $z_{k}^T \vartheta$ and show that the following implication holds:
    \begin{equation}
        \label{eq:implication0}
        \theta \in P_{k+1}^i \implies w_k(i)  \leq \bar{w}_i  - \frac{(-\alpha_0) \beta}{n_z} \delta.
    \end{equation}
    \begin{itemize}[leftmargin=*]
        \item $\psi_{k}^{-} \leq z_{k}^T \vartheta$: From (\ref{eq:subtrigger}), $\gamma_{k,i} = 1$ implies $w_k(i)\geq \bar{w}_i - \psi_{k}^-$ or $w_k(i) \leq -\bar{w}_i + \psi_{k}^+$. If $\psi_{k}^i \leq z_{k}^T \vartheta$,  $w_k(i)\geq \bar{w}_i - \psi_{k}^-$ contradicts with $\mathcal{E}_{k}'$. Combining (\ref{eq:event_theta_in_P}), we have $\theta \in P_{k+1}^i \implies \gamma_{k,i} = 0$, which can be equivalently formulated as $-\bar{w}_i + \psi_{k}^+ \leq w_k(i)  \leq \bar{w}_i - \psi_{k}^- $. Recall that $\psi_k^-$ is lower bounded by (\ref{eq:psi-bound}), we have (\ref{eq:implication0}) holds.

        \item $\psi_{k}^{-} > z_{k}^T \vartheta$: In this case, $\gamma_{k,i} = 1$ and $\mathcal{E}_{k}'$ together imply $ \bar{w}(i) - \psi_{k}^{-} \leq w_k(i) \leq \bar{w}(i) - z_{k}^T \vartheta$ or $w_k(i) \leq -\bar{w}(i) + \psi_{k}^{+}$. The event $\gamma_{k,i}=0$ implies $-\bar{w}(i) + \psi_{k}^{+} \leq w_k(i)\leq\bar{w}(i) - \psi_{k}^{-} $. Thus, $w_k(i)$ satisfies $w_k(i)\leq \bar{w}(i) - z_{k}^T \vartheta$. Recall that $z_{k}^T \vartheta > \beta \delta > \frac{(-\alpha_0) \beta}{n_z} \delta $, we conclude that (\ref{eq:implication0}) holds.
    \end{itemize}

    The above discussion shows that for any index $k$ satisfying the condition in (\ref{eq:sgnPE}), the event $\theta \in P_{k+1}^i$ implies that $w_k(i)$ satisfy $w_k(i)  \leq \bar{w}_i  - \frac{(-\alpha_0) \beta}{n_z} \delta$ or $w_k(i) \geq -\bar{w}_i + \frac{(-\alpha_0) \beta}{n_z} \delta$. Since $w_k(i)$ is independent of $z_k$ and the information $\mathcal{F}_k$ available at step $k$, the conditional probability of this event given $\mathcal{F}_k$ has the following expression
    \begin{equation}
        \mathbb{P}(\Omega_{k} \mid \mathcal{F}_k) \leq 1-q\left(\frac{(-\alpha_0) \beta}{n_z} \delta\right).
    \end{equation}

    Now we derive upper-bound for each $\mathbb{P}(\Omega_{[tN_G + 1:(t+1)N_G]} \mid \Omega_{[tN_G]})$. Under Assumption~\ref{as:StrongerPE}, there exist $k \in [tN_G + 1:(t+1)N_G]$ such that (\ref{eq:sgnPE}) holds. Then, we have 
    \begin{align*}
        &\mathbb{P}(\Omega_{[tN_G + 1:(t+1)N_G]} \mid \Omega_{[tN_G]}) \\ 
        =& \mathbb{P}(\Omega_{[tN_G : k-1]}\mid \Omega_{[tN_G]}) 
        \mathbb{P}(\Omega_{k}\mid \Omega_{[k-1]})\\
        &\mathbb{P}(\Omega_{[k+1:(t+1)N_G]}\mid \Omega_{[k]})\\
        \leq& \mathbb{P}(\Omega_{k}\mid \Omega_{[k-1]})  \leq 1-q\left(\frac{(-\alpha_0) \beta}{n_z} \delta\right).
    \end{align*}
    Therefore, the probability of $\theta \in P_k^i$ derived in (\ref{eq:thetaInP}) can be upper-bounded by 
    \begin{equation}
        \label{eq:thetaInP_Upperbound}
        \mathbb{P}(\theta \in P_K^i) \leq \left( 1-q\left(\frac{(-\alpha_0) \beta}{n_z} \delta\right) \right)^{\lfloor \frac{K}{N_G} \rfloor}.
    \end{equation}
    From (\ref{eq:thetaInP_Upperbound}), (\ref{eq:thetaInP_asymptotic}) holds as the probability exponentially decreases to 0 as $K \to \infty$, which proves the Corollary~\ref{corollary:convergence}.
\end{proof}
}
{
\begin{proof}
    Due to space limitation, the detailed proof is provided
in the arXiv version of this paper~\cite{li2025online}.
\end{proof}
}

\begin{remark}
    Corollary~\ref{corollary:convergence} provides a point-wise convergence for feasible set. This extended result is enabled by Assumption~\ref{as:StrongerPE}, which plays a key role analogous to persistent excitation. It ensures that, infinitely often, the system input $u_k$ together with $w_k$ generates sufficiently informative signal that the signs of two expressions does not be identical all the time when $z_k$ satisfies $\vert z_k^T \vartheta \vert > \beta \delta$. The structure of Assumption~\ref{as:StrongerPE} also gives insights into input design of $u_k$. It highlights how appropriately chosen inputs $u_k$ can theoretically ensure not only reduction of uncertainty in volume, but also convergence of estimator to the true parameters.
\end{remark}

\begin{revBlock}
We now examine the boundary case where $\alpha_0 = -1$. 
In Lemma~\ref{lemma:GrunbaumInequality}, the contraction factor $C(\alpha_0, n_z)$ becomes $1$, which renders the geometric contraction argument in Theorem~\ref{Theorem:ThresholdTriggerConvergence} inapplicable. 
However, this case corresponds to a conservative selection policy where any data point that imposes a cut on the current feasible set is retained. 
This leads to the following result.
\end{revBlock}

\begin{revBlock}
\begin{corollary}
    \label{Theorem:ImmediateTrigger}
    Consider system (\ref{eq:system}) under selection threshold $\alpha_0 = -1$. 
    Suppose that Assumptions~\ref{as:disturbance}--\ref{as:CDD} hold.
    Then, the estimator $\widehat{\Theta}_K$ satisfies the following convergence property almost surely:
    \begin{equation}\label{eq:corollary_immediate_trigger}
        \lim_{K \to \infty} \widehat{\Theta}_K = \Theta_*.
    \end{equation}
\end{corollary}

\begin{proof}
When $\alpha_0 = -1$, the coreset contains all informative constraints, hence $\widehat{\Theta}_K = \widehat{\Theta }_{K}^{\text{full}}$.
Under Assumptions~1--4, the almost sure convergence
$\widehat{\Theta}_K^{\mathrm{full}}\to \Theta_*$ is a classical result in set-membership identification~\cite{lu2021robust}, which implies \eqref{eq:corollary_immediate_trigger}.
\end{proof}
\end{revBlock}

\begin{revBlockG}
\subsection{Robustness to Disturbance-Bound Mismatch}
\label{section:BoundMismatch}

In this section we discuss the convergence properties when the exact upper bound $\bar{w}$ in Assumption~\ref{as:CDD} is unknown. Let $\hat{w} \triangleq \bar{w} + \Delta w$ where $\Delta w > 0$. The hyperplane constraints in \eqref{eq:H+} and \eqref{eq:H-} becomes:
\begin{align}
    H_{+}^i & = \{\theta: \quad (z_{K-1})^T \theta \leq x_{K}(i) + \hat{w}(i)\},\\
    H_{-}^i & = \{\theta: (-z_{K-1})^T \theta \leq -x_{K}(i) + \hat{w}(i)\} .
\end{align}

Let $\bar{P}_K^i$ denote the feasible set constructed under threshold $\alpha_0 = -1$ over time horizon $[K]$. We have the following lemma characterizing $\bar{P}_K^i$ under full dataset:

\begin{lemma}
    \label{lemma:untight_bound_full}
    Under Assumptions~\ref{as:disturbance}--\ref{as:CDD} and selection threshold $\alpha_0 = -1$, the $\lim_{k \to \infty }\bar{P}_k^i$ exists, defined as $\bar{P}_\infty^i$ and has the following asymptotic property almost surely:
    \begin{equation}\label{eq:P_inf_untight}
        \bar{P}_\infty^i \subseteq \theta_*^i \oplus \mathcal{B}\!\left(\mathbf{0},\frac{\Delta w(i)}{\beta}\right),
    \end{equation}
\end{lemma}

\begin{proof}
    It is sufficient to show that
    $\lim_{K \to \infty} \mathbb{P}(\theta \in \bar{P}_K^i) = 0$
    for all $\theta \in \mathbb{R}^{n_z}$ with $\left\lVert \theta - \theta_*^i \right\rVert_2   = \delta + \frac{\Delta w(i)}{\beta}$ and $\delta>0$. The event $\{\theta \in \bar{P}_K^i\}$ implies that 
    \begin{equation}\label{eq:untight_equiv}
        \{-\bar{w}(i)-\Delta w(i) - z_k^T \vartheta \leq  w_k(i) \leq \bar{w}(i) + \Delta w(i) - z_k^T \vartheta\}
    \end{equation}
    holds for all $k \in [K]$. 
    By Lemma~\ref{lemma:gamma_z_lower_bound}, we have $\max_{k \in [k_0+1:\,k_0 + N_u]} |\vartheta ^T z_k| \geq \beta \delta + \Delta w(i).$
    Then, \eqref{eq:untight_equiv} implies that there exists $k \in [k_0+1:\,k_0 + N_u]$ such that $ \vert w_k(i) \vert \leq \bar{w}(i) - \beta \delta$, which decrease $\mathbb{P}(\theta \in \bar{P}_K^i) $ by multiplying $1-q(\beta \delta)$. Thus, $\lim_{K \to \infty} \mathbb{P}(\theta \in \bar{P}_K^i) = 0$ and \eqref{eq:P_inf_untight} holds.
\end{proof}

Lemma~\ref{lemma:untight_bound_full} shows that under unknown upperbound mismatch $\Delta w$, the feasible set converge to a set $\mathcal{B}(\theta_*^i, \Delta w(i)/\beta)$ covering true parameter $\theta_*^i$ instead of converging to the point value. Now we examine how threshold selection mechanism ($\alpha_0 \in (-1,0)$) influences the convergence behavior.

To relate constraint violations to set distances, we recall the following classical result of Hoffman on polyhedral sets~\cite{hoffman2003approximate}.

\begin{lemma}
    \label{lemma:Hoffman}
    For non-empty polyhedron $P = P_{\mathcal{H}}(H,C)$, there exists a \textit{Hoffman constant} $\mathfrak{H}(P)$ such that for all $\theta \in \mathbb{R}^{n_z}$, 
    \begin{equation}
        \mathrm{dist}(\theta,P) \leq \mathfrak{H}(P) \cdot \left\lVert (H \theta - C)_{+} \right\rVert_{\infty},
    \end{equation}
    where the residual $(H \theta - C)_{+} \triangleq \max\{0,H \theta - C\}$.
\end{lemma}

Let $d_{\mathrm{H}}(X,Y)$ denote the Hausdorff distance between non-empty subsets $X \subset \mathbb{R}^{n_z}$ and $Y\subset \mathbb{R}^{n_z}$, given by
\begin{equation}
    d_{\mathrm{H}}(X,Y) \triangleq \max \{\sup_{x \in X} \mathrm{dist}(x,Y), \sup_{y \in Y} \mathrm{dist}(y,X) \}.
\end{equation}

Let $\bar{h}_k^i$ denote the maximum of centered supporting function $\tilde{h}_P(\cdot)$ (defined in \eqref{eq:centeredSF}) for feasible set $P_{t-1}^i$ over $[k]$:
\begin{equation}
\label{eq:barh_def}
\bar{h}_k^i \triangleq
\max_{1\le t\le k}\ \max\!\left\{
\tilde h_{ P_{t-1}^i}(z_{t-1}),\ \tilde h_{ P_{t-1}^i}(-z_{t-1})
\right\}.
\end{equation}

Now we introduce the following results under $\alpha_0 \in (-1,0)$:

\begin{theorem}
    \label{theorem:dist_alpha0}
    Consider system~(\ref{eq:system}) under Assumptions~\ref{as:disturbance},\ref{as:initialSet},\ref{as:PE},\ref{as:CDD}. Fix $i \in [n_x]$. 
    Let $P_K^i$ denote the feasible set generated by the selected dataset under a fixed $\alpha_0\in(-1,0)$. Then the following statements hold.
    \begin{itemize}
        \item[(i)] The Hausdorff distance between $P_K^i$ and $\bar P_K^i$ admits the upper bound 
        \begin{equation}
        \label{eq:dist_FPS_alpha0}
         d_{\mathrm{H}}(P_K^i,\bar{P}_K^i) \leq {\mathfrak{H}}(\bar{P}_K^i) \bar{h}_K^i  (\alpha_0 + 1),
        \end{equation}
        where $\mathfrak{H}(\bar{P}_K^i)$ is a Hoffman constant associated with the feasible set $\bar{P}_K^i$ constructed using the full dataset.
        \item[(ii)] The sequence $\{P_k^i\}_{k \ge 0}$ converges to a non-empty compact set $P_\infty^i$ and the following holds almost surely:
        \begin{equation}\label{eq:P_infty_untight_alpha0}
        P_{\infty}^i \!\subseteq\! \theta_*^i \oplus \mathcal{B}\left(\!\mathbf{0},\frac{ \Delta w(i)}{\beta}\!+\!\limsup_{k\to\infty} \mathfrak{H}(\bar{P}_k^i)\bar{h}_k^i (\alpha_0+1 )\!\right)\!.
    \end{equation}
    \end{itemize}
\end{theorem}

\begin{proof}
    Since $\bar{P}_K^i \subseteq P_K^i$, $\mathrm{dist}(\theta,P_K^i) = 0$ for $\theta \in \bar{P}_K^i$. We have $d_{\mathrm{H}}(P_K^i,\bar{P}_K^i) = \sup_{\theta \in {P}_K^i} \mathrm{dist}(\theta,\bar{P}_K^i)$.
    By Lemma~\ref{lemma:Hoffman}, for $\bar{P}_K^i = P_{\mathcal{H}}(H_K^i,C_K^i)$, it holds that 
    \begin{equation}
        \mathrm{dist}(\theta,\bar{P}_K^i) \leq \mathfrak{H}(\bar{P}_K^i) \cdot \left\lVert (H_K^i \theta - C_K^i)_{+} \right\rVert_{\infty}.
    \end{equation}
    For $\theta \in {P}_K^i$, the residual $(H_K^i \theta - C_K^i)_{+}$ can only be contributed by constraints that are active in representation of $\bar{P}_K^i$ but not ${P}_K^i$.
    Each constraint satisfies $\alpha_{\pm}^i < \alpha_0$ by design of selection criterion \eqref{eq:subtrigger}. Otherwise, this constraint should be selected and will not be excluded since it constitutes the full dataset representation ${P}_K^i$. 
    Let $\mathcal{M}$ denote set of indices of those constraints. 
    For $t \in \mathcal{M}$, since $- (x_t(i)+\hat{w}) = \alpha_{+}^i \widetilde{h}_{P_{t-1}^i}(z_{t-1}) - \langle z_{t-1}, g(P_{t-1}^i) \rangle$, 
    the term $\langle z_{t-1},\theta \rangle - x_t(i)-\hat{w}(i)$ in $H_K^i \theta - C_K^i$ can be written as 
    \begin{align*}
        & \langle z_{t-1},\theta -  g(P_{t-1}^i) \rangle + \alpha_{+}^i \widetilde{h}_{P_{t-1}^i}(z_{t-1})\\ 
        \leq &  \widetilde{h}_{P_{t-1}^i}(z_{t-1}) + \alpha_{+}^i \widetilde{h}_{P_{t-1}^i}(z_{t-1}) \leq (\alpha_0 + 1) \widetilde{h}_{P_{t-1}^i}(z_{t-1}).
    \end{align*}
    Similarly, $\langle -z_{t-1},\theta \rangle + x_t(i)-\hat{w}(i)\leq (\alpha_0 + 1) \widetilde{h}_{P_{t-1}^i}(-z_{t-1})$. 
    Then we have $\left\lVert (H_K^i \theta - C_K^i)_{+} \right\rVert_{\infty} \leq (\alpha_0 + 1) \bar h_K^i$,
    which yields the inequality in \eqref{eq:dist_FPS_alpha0}.
    
    The statement (ii) follows from the nestedness and compactness of $\{P_k^i\}_{k \ge 0}$ and the full-data limit set property in Lemma~\ref{lemma:untight_bound_full}.
\end{proof}

\begin{remark}\label{remark:theorem2}
Theorem~\ref{theorem:dist_alpha0} quantifies the difference between the full-dataset feasible set $\bar{P}_K^i$ and the coreset-induced feasible set ${P}_K^i$. Their Hausdorff distance is upper bounded by ${\mathfrak{H}}(\bar{P}_K^i) \bar{h}_K^i  (\alpha_0 + 1)$, where $\alpha_0=-1$ recovers the full dataset case, i.e., $P_K^i=\bar{P}_K^i$.
This also justifies the term \emph{coreset selection}: the selected inequalities form a coreset whose worst-case approximation error is explicitly controlled by $(\alpha_0+1)$.
Moreover, \eqref{eq:P_infty_untight_alpha0} indicates that the limiting estimation error can be decomposed into an irreducible component $\Delta w(i)/\beta$ due to the disturbance-bound mismatch and an additional component
$\limsup_{k\to\infty}\mathfrak H(\bar P_k^i)\bar h_k^i(\alpha_0+1)$, which is caused by discarding constraints by selection mechanism.
\end{remark}

Now we derive the upper bound on the Hoffman constant $\mathfrak{H}(\bar{P}_K^i)$ and $\bar{h}_k^i$ in the following proposition:

\begin{proposition}\label{proposition:upperbound_H}
    Let $(\tilde H_k^i,\tilde C_k^i)$ denote a minimal $\mathcal{H}$-representation of $\bar P_k^i$ (i.e., $\bar P_k^i=\{\theta:\tilde H_k^i\theta\le \tilde C_k^i\}$ and no inequality is redundant). 
    Assume that there exists $\sigma_\star>0$ such that for every $k$ and every projection point $\theta^\Pi \in \Pi_{P_k^i}(\mathbb{R}^{n_z})$, there exists an index set $I(\theta^{\Pi})$ of active constraints at $\theta^{\Pi}$ with $|I(\theta^{\Pi})|\leq n_z$ and $\sigma_{\min}\!\big((\tilde H_k^i)_{I(\theta^\Pi)}\big)\ \ge\ \sigma_\star$.
    Define $R_0^i \triangleq \mathfrak{R} (\theta_0^i,\theta_*^i)$, we have:
    \begin{equation}\label{eq:limsupH}
        \limsup_{k\to\infty} \mathfrak{H}(\bar{P}_k^i)\bar{h}_k^i \leq 2 \frac{\sqrt{n_z}}{\sigma_\star}  b_z R_0^i
    \end{equation}
\end{proposition}

\begin{proof}
    The upper bound of Hoffman constant for polyhedra can be characterized by the following inequality~\cite{klatte1995error}:
    \begin{equation}
        \mathfrak H(\bar{P}_k^i) \le \sup_{\theta^\Pi} \left\lVert \left(\tilde H_k^i (I(\theta^\Pi),:)\right)^\dagger\right\rVert_2\cdot \sqrt{|I(\theta^\Pi)|}  \le \frac{\sqrt{n_z}}{\sigma_\star},  
    \end{equation}
    where $(\cdot)^\dagger$ is the Moore--Penrose pseudoinverse and we used $|I(\theta^\Pi)|\le n_z$ and $\|A^\dagger\|_2=1/\sigma_{\min}(A)$ together with $\sigma_{\min}\!\big((\tilde H_k^i)_{I(\theta^\Pi)}\big)\ \ge\ \sigma_\star$.
    Since adding extra inequalities to an $H$-system can only increase the residual
    $\|(H\theta-C)_+\|_\infty$ while leaving $\bar P_k^i$ unchanged (redundant constraints), we have $\mathfrak H(\bar{P}_k^i) \le {\sqrt{n_z}}/{\sigma_\star}$. Regarding $\bar{h}_k^i$, it is upper bounded by $\max_{1\le t\le k} 2\left\lVert z_{t-1}\right\rVert_2 R_0^i \leq 2 b_z R_0^i$, giving \eqref{eq:limsupH}.
\end{proof}

Combining Theorem~\ref{theorem:dist_alpha0} with Proposition~\ref{proposition:upperbound_H} directly gives the following result without proof for brevity:

\begin{corollary}\label{corollary:unknownBoundAsymptoticExpression}
    Under conditions in Proposition~\ref{proposition:upperbound_H}, it holds almost surely that 
    \begin{equation}\label{eq:unknownBoundAsymptoticExpression}
        P_{\infty}^i \subseteq \theta_*^i \oplus \mathcal{B}\left(\mathbf{0},\frac{ \Delta w(i)}{\beta} +  
        c^i (\alpha_0+1 )\right),
    \end{equation}
    where $c^i \triangleq 2 \frac{\sqrt{n_z}}{\sigma_\star}  b_z R_0^i$.
\end{corollary}

\begin{remark}\label{remark:explicitRadiusBound}
Corollary~\ref{corollary:unknownBoundAsymptoticExpression} implies that the coreset-induced term in \eqref{eq:P_infty_untight_alpha0} admits an explicit bound
$2\frac{\sqrt{n_z}}{\sigma_\star}b_zR_0^i(\alpha_0+1)$; hence, the effect of $\alpha_0$ on the worst-case asymptotic radius is linear.
When $\Delta w(i)=0$ in \eqref{eq:unknownBoundAsymptoticExpression} and
$c\triangleq 2\frac{\sqrt{n_z}}{\sigma_\star}b_z(\alpha_0+1)<1$ holds uniformly, we can restart the same argument from any time $k$ by treating $P_k^i$ as the new initial set.
This yields a geometric contraction of the radii, i.e., $R_{k+\Delta k}^i\le c\,R_k^i$ for all sufficiently large $\Delta k$, and thus $R_k^i\to 0$, implying $P_\infty^i=\{\theta_*^i\}$.
This contraction interpretation is consistent with analysis in Section~\ref{section:measureContraction}, which provides a sharp convergence characterization under the threshold-trigger mechanism.
\end{remark}
\end{revBlockG}

\subsection{Selection Statistics and Expected Coreset Size}
\label{section:coresetSize}
\begin{revBlockD}
This section investigates the statistical properties of cumulative number of selected data points $n^i(K)$ defined in \eqref{eq:def_niK}.
We note that $n^i(K)$ is governed by selection probability, whose characterization requires an explicit upper bound on $\mathfrak{R}(P_k^i,\theta_*^i)$.
To relate $\mathfrak{R}(P_k^i,\theta_*^i)$ to the feasible-set volume in \eqref{eq:ThresholdTrigger(i)}, we denote the inradius of $P$ centered at $\theta$ as
\begin{equation}
    \label{eq:inradius}
    \mathfrak{r}(P,\theta) \triangleq \sup \left\{ r \ge 0 \;\middle|\; B_2(\theta,r) \subseteq P \right\},
\end{equation}
where $B_2(\theta,r) \triangleq \{ x \in \mathbb{R}^{n_z} \mid \|x-\theta\|_2 \le r \}$. 
Then, we make the following mild condition on the shape of feasible set.

\begin{assumption}\label{as:uniform_shape}
    There exist a constant $\kappa>0$ and a sequence $\{\delta_k\}_{k\ge1}$ such that for all $k\ge1$ and all $i\in[n_x]$, 
    \begin{equation}\label{eq:assumption_uniform_shape}
        \mathbb{P}\!\left(\mathfrak{R}(P_k^i,\theta_*^i)/ \mathfrak{r}(P_k^i,\theta_*^i)> \kappa\right) \le \delta_k, \quad \sum_{k=1}^\infty \delta_k < \infty.
    \end{equation}
\end{assumption}

\begin{remark}
Assumption~\ref{as:uniform_shape} allows $P_k^i$ to be arbitrarily thin at finitely many time instants.
By the Borel--Cantelli lemma, $\mathfrak{R}(P_k^i,\theta_*^i)/ \mathfrak{r}(P_k^i,\theta_*^i) \le \kappa$ holds for all sufficiently large $k$ almost surely.
Hence, the subsequent expected bound is unaffected up to an $\mathcal{O}(1)$ transient.
\end{remark}

We introduce the following notations to formalize results on expectation of $n^i(K)$.  
Let $\bar{Q}(\varepsilon)\triangleq \min\{1,Q(\varepsilon)\}$ where $Q(\varepsilon)$ is the boundary function in Assumption~\ref{as:CDD}.
Let $v_{n_z} \triangleq \mu(B_2(\mathbf{0},1))$ denote the volume of unit Euclidean ball.
Define constants $B_i \triangleq b_z\,\kappa \left(\mu(\theta_0^i)/{v_{n_z}}\right)^{1/n_z}$, $\eta(\alpha_0) \triangleq C(\alpha_0,n_z)^{1/n_z}\in(0,1)$. 
Let $q_t \triangleq \bar Q\!\left(B_i\,\eta(\alpha_0)^t\right),  t \in \mathbb{Z}_{\geq 0}$ and $S(k) \triangleq \sum_{t=0}^{k-1} (1/{q_t})$.
The following theorem characterizes an upper-bound on expectation of $n^i(K)$ w.r.t. selection threshold $\alpha_0 \in (-1,0)$.

\begin{theorem}
    \label{theorem:expected_bounds}
    Consider system (\ref{eq:system}) under Assumptions \ref{as:disturbance}--\ref{as:CDD}, \ref{as:uniform_shape} and selection threshold $\alpha_0 \in (-1,0)$. The expectation of cumulative number of selected data points $n^i(K)$ defined in \eqref{eq:def_niK} satisfies:
    \begin{equation}
        \label{eq:niK_upperbound_general}
        \mathbb{E}\left[ n^i(K) \right] \leq \sum_{k=1}^{\infty} \min\left\{1,\frac{K}{S(k)}\right\} + \sum_{k=1}^K \delta_k.
    \end{equation}
    In particular, if $Q(\varepsilon) \leq C_Q \varepsilon^p$ for some constants $C_Q>0$ and $p>0$, then
    \begin{equation}\label{eq:niK_upperbound_Q}
        \mathbb{E}\left[ n^i(K) \right] \leq 
        \frac{\ln\left(1 + C(\rho(\alpha_0)^{-1}-1)K\right)}{\ln(\rho(\alpha_0)^{-1})} + \sum_{k=1}^K \delta_k,
    \end{equation}
    where $\rho(\alpha_0) = \eta(\alpha_0)^p= C(\alpha_0,n_z)^{p/n_z} \in (0,1)$ and $C = C_Q B_i^p$.
\end{theorem}

\begin{proof}
    We proceed with the proof of~\eqref{eq:niK_upperbound_general} in two steps. 
    In the first step, we establish an upper-bound on conditional selection probability $\mathbb{P} (\gamma_{k,i}=1\mid\mathcal F_{k-1})$ w.r.t. $\mathfrak{R}(P_{k-1}^i,\theta_*^i)$. 
    From the proof of Theorem~\ref{Theorem:ThresholdTriggerConvergence}, $\gamma_{k,i}=1$ if and only if 
    \begin{equation}\label{eq:gammaki_equal_1}
        w_{k-1}(i) \leq -\bar{w}(i) + \psi_{k-1}^{+} \text{ or } w_{k-1}(i) \geq \bar{w}(i) - \psi_{k-1}^{-},
    \end{equation}
    where $\psi_{k-1}^{+}$ and $\psi_{k-1}^{-}$ are defined in \eqref{eq:psi+} and \eqref{eq:psi-}.
    Since $\left\lVert z_k \right\rVert_2 \leq b_z $ and $\max \left\lVert \theta-\theta^*\right\rVert_2 \leq \mathfrak{R}(P_{k-1}^i,\theta_*^i)$, we have 
    \begin{align*}
        &\psi_{k-1}^{+}  = \left( (-\alpha_0)\widetilde{h}_{P_{k-1}^i}(z_{k-1}) + z_{k-1}^T (g(P_{k-1}^i) - \theta_*^i) \right) =  \\ 
        & (-\alpha_0) \max_{\theta \in P_{k-1}^i}\langle z_k, \theta-\theta^* \rangle + (-\alpha_0-1) \langle z_{k-1}, \theta^* - g(P_{k-1}^i) \rangle \\
        &\leq (-\alpha_0)b_z\mathfrak{R}(P_{k-1}^i,\theta_*^i) + (1+\alpha_0) b_z\mathfrak{R}(P_{k-1}^i,\theta_*^i) \\ 
        &= b_z\mathfrak{R}(P_{k-1}^i,\theta_*^i).
    \end{align*}
    Similarly, $\psi_{k-1}^{-} \leq b_z\mathfrak{R}(P_{k-1}^i,\theta_*^i) $. Then, combining \eqref{eq:gammaki_equal_1} and Assumption~\ref{as:CDD} yields the following expression:
    \begin{equation}\label{eq:cond_gamma_bound}
        \mathbb{P} (\gamma_{k,i}=1\mid\mathcal F_{k-1}) \leq \bar{Q}(b_z\mathfrak{R}(P_{k-1}^i,\theta_*^i)).
    \end{equation}

    With the upper-bound on conditional selection probability, we proceed to the second step of the
    proof where we show that $\mathbb{E}\left[ n^i(K) \right]$ is bounded by~\eqref{eq:niK_upperbound_general}. 

    Define event $\mathcal G_k^i \triangleq \left\{\mathfrak R(P_k^i,\theta_*^i)/\mathfrak r(P_k^i,\theta_*^i) \le \kappa\right\}$. 
    By \eqref{eq:assumption_uniform_shape}, $\mathbb P((\mathcal G_k^i)^c)\le \delta_k$ where $(\cdot)^c$ denote the complement of event.
    Let $N_k\triangleq n^i(k)$ and $\Delta_k\triangleq N_k-N_{k-1}\in\{0,1\}$.
    Consider the process of $N_k^{g} \triangleq \sum_{t=1}^k \mathbf 1\{\gamma_{t,i}=1\}\mathbf 1\{\mathcal G_{t-1}^i\}$, and $\Delta_k^{g}\triangleq N_k^{g}-N_{k-1}^{g}\in\{0,1\}$.
    Then for each $k\ge1$, $\mathbf 1\{\gamma_{k,i}=1\} \le \mathbf 1\{\gamma_{k,i}=1\}\mathbf 1\{\mathcal G_{k-1}^i\} + \mathbf 1\{(\mathcal G_{k-1}^i)^c\}$, hence $N_K \le N_K^{g} + \sum_{k=1}^K \mathbf 1\{(\mathcal G_{k-1}^i)^c\}$.
    Taking expectations and using $\mathbb P((\mathcal G_{k-1}^i)^c)\le \delta_{k-1}$, we have 
    \begin{equation}\label{eq:ENK}
        \mathbb E[N_K] \le \mathbb E[N_K^{g}] + \sum_{k=1}^{K-1} \delta_k.  
    \end{equation}

    When $G_{k-1}^i$ holds, we have $\mu(P_k^i) \geq v_{n_z}\,\mathfrak r(P_k^i,\theta_*^i)^{n_z} \geq v_{n_z}\,(\mathfrak R(P_k^i,\theta_*^i)/\kappa)^{n_z}$. Then we have 
    \begin{equation}\label{eq:R_upperbound}
        \mathfrak{R}(P_K^i,\theta_*^i) \leq \kappa \left( \frac{\mu(P_K^i)}{v_{n_z}}\right)^{1/{n_z}}.
    \end{equation}
    By~\eqref{eq:R_upperbound} and \eqref{eq:proof_mu}, since $N_{k-1}\ge N_{k-1}^g$ and $\eta(\alpha_0)\in(0,1)$, 
    \begin{equation}\label{eq:bRz}
        \!\!\!b_z\,\mathfrak R(\!P_{k-1}^i,\theta_*^i)
        \!\le\!
        b_z\,\kappa\left(\frac{\mu(P_{k-1}^i)}{v_{n_z}}\right)^{1/n_z}
        \!\le\!
        B_i\,\eta(\alpha_0)^{\,N_{k-1}^{g}} .
    \end{equation} 
    Since $\bar{Q}$ is nondecreasing, \eqref{eq:cond_gamma_bound} implies
    \begin{align}
        \mathbb P(\Delta_k^{g}=1\mid\mathcal F_{k-1})
        &= \mathbf 1\{\mathcal G_{k-1}^i\}\,\mathbb P(\gamma_{k,i}=1\mid\mathcal F_{k-1}) \label{eq:delta^g1} \\ 
        &\le \bar Q\!\left(B_i\,\eta(\alpha_0)^{\,N_{k-1}^{g}}\right)
         = q_{\,N_{k-1}^{g}}.\label{eq:delta^g2}
    \end{align} 
    Define $F:\mathbb Z_{\ge0}\to\mathbb R_{\ge0}$ by $F(0)=0, F(m)=\sum_{t=0}^{m-1}\frac{1}{q_t}=S(m)\quad (t\geq 1)$. 
    Then $F(N_k^{g})-F(N_{k-1}^{g})=\mathbf 1\{\Delta_k^{g}=1\}/ q_{N_{k-1}^{g}}$ and hence $\mathbb E\!\left[F(N_k^{g})-F(N_{k-1}^{g})\mid\mathcal F_{k-1}\right]
        = {\mathbb P(\Delta_k^{g}=1\mid\mathcal F_{k-1}) / q_{N_{k-1}^{g}}} \le 1$.
    Taking expectations and summing over $k \in [K]$ yields
    \begin{equation}\label{eq:EFN}
        \mathbb E \left[ F(N_K)\right] \leq K.
    \end{equation}
    Since $F$ is non-decreasing, Markov's inequality with \eqref{eq:EFN} gives
    \begin{align*}
        \mathbb E\left[N_K^g\right] & = \sum_{k=1}^{\infty} \mathbb{P}\{N_K^g \geq k\} = \sum_{k=1}^{\infty} \mathbb{P}\{F(N_K^g) \geq F(k)\} \\ 
        &  \leq  \sum_{k=1}^{\infty}  \min \left\{1,\frac{\mathbb E\left[F(N_K^g)\right]}{F(k)}\right\} \leq \sum_{k=1}^{\infty} \min\left\{1,\frac{K}{S(k)}\right\}.
    \end{align*}
    Combining this with \eqref{eq:ENK} proves \eqref{eq:niK_upperbound_general}.

    When there exists constant $C_Q>0$ and $p>0$ such that $Q(\varepsilon) \leq C_Q \varepsilon^p$ for $\varepsilon>0$, \eqref{eq:delta^g2} becomes:
    \begin{equation*}
        \mathbb P(\Delta_k^{g}=1\mid\mathcal F_{k-1})
        \le
    C_Q\left(B_i\,\eta(\alpha_0)^{N_{k-1}^{g}}\right)^p
    =
    C\,\rho^{\,N_{k-1}^{g}}.
    \end{equation*}
    Let $\rho \triangleq \rho(\alpha_0)$. Then $\rho^{-N_k^g}=\rho^{-(N_{k-1}^g+\Delta_k^g)}=\rho^{-N_{k-1}^g}\rho^{-\Delta_k^g}$, 
    and conditioning on $\mathcal{F}_{k-1}$ gives
    \begin{align*}
        \mathbb{E}\left[\rho^{-N_k^g}\mid\mathcal F_{k-1}\right] &= \rho^{-N_{k-1}^g}\left(1 + (\rho^{-1}-1)\mathbb{E}\left[\Delta_k^g\mid\mathcal{F}_{k-1}\right]\right)\\ 
        & \leq \rho^{-N_{k-1}^g} + C(\rho^{-1}-1).
    \end{align*}
    Taking expectations and summing over $k \in [K]$ yields:
    $\mathbb{E}\left[\rho^{-N_k^g}\right] \leq 1 +  C(\rho^{-1}-1)K$.
    Since $\rho^{-x}$ is convex for $\rho \in (0,1)$, we have $\rho^{-\mathbb E[N_K^g]}\le \mathbb E[\rho^{-N_K^g}] \le 1 + C(\rho^{-1}-1)K$.
    By taking logarithms and rearranging, we have \eqref{eq:niK_upperbound_Q}.
\end{proof}

\begin{remark}\label{remark:theorem3}
    When $K > 1/C$, the right-hand side of~\eqref{eq:niK_upperbound_Q} is a non-increasing function of $\alpha_0 \in (-1,0)$. 
    Such monotonicity indicates that a larger value of $\alpha_0$ leads to a smaller expected cumulative number of selected data points, reflecting a more conservative selection policy.
    Moreover, the bound in~\eqref{eq:niK_upperbound_Q} implies that $\mathbb{E} \left[n^i(K)\right]$ grows at most on the order of $\mathcal{O}(\ln K)$.
    In contrast to the linear growth $\mathcal{O}(K)$ when no data selection and polyhedral update are performed, this logarithmic scaling ensures that the computational complexity of set-membership identification remains controlled in online learning settings.
\end{remark} 
\end{revBlockD}

\begin{revBlock}
\section{Extensions and Discussion}\label{section:ExtensionAndDiscussion}
\rev{In this section, we discuss extensions of the proposed framework to (i) nonlinear dynamics and (ii) noisy measurements, and we further analyze the computational complexity and position the method relative to least-squares identification under stochastic-noise assumptions.}

\begin{revBlockE}
\subsection{Extension to Nonlinear Cases}
\label{section:nonlinear}
We consider a class of nonlinear system $x_{k} = f(x_{k-1},u_{k-1}) + w_{k-1}$, where each component $f_i(\cdot)$ can be expressed exactly in known basis functions as $f_i(z_{k-1}) = \sum_{j=1}^{n_{\theta}} \theta^i_*(j) \phi_j(z_{k-1})$,
where $\theta^i_* \in \mathbb{R}^{n_\theta}$ is the \textit{unknown} parameter vector and $\phi(\cdot) \in\mathbb R^{n_\theta}$ is a vector of \textit{known} basis functions.
Such linear-in-the-parameters nonlinear model structures are widely used in nonlinear system identification; see, e.g.,~\cite{schoukens2019nonlinear}.
In view of~\eqref{eq:SME}, we have the following proposition:
\begin{proposition}\label{prop:nonlinearSME}
The true parameter $\theta_*^i$ lies in the set
\begin{equation}\label{eq:nonlinearSME}
    \mathcal{S}_k^i = \left\{\theta: \left\lvert \phi(z_{k-1})^T \theta - x_k(i)\right\rvert \leq \bar{w}(i)  \right\}.
\end{equation}
\end{proposition}
\begin{proof}
    From the system model, the $i$-th component satisfies $x_k(i)=f_i(z_{k-1})+w_{k-1}(i)=\phi(z_{k-1})^T\theta_*^i+w_{k-1}(i)$.
    Since $w_{k-1}(i) \in [-\bar{w}(i), \bar{w}(i)]$, $x_{k} - \sum_{j=1}^{n_{\theta}} \phi_j(z_{k-1}) \theta^i_*(j) = w_{k-1}(i) \in [-\bar{w}(i), \bar{w}(i)]$, which gives \eqref{eq:nonlinearSME}.
\end{proof}

Equation~\eqref{eq:nonlinearSME} is identical to \eqref{eq:one_state_illustration} after replacing the original regressor $z_{k-1}$ with the basis function regressor $\phi(z_{k-1})$.
Therefore, the proposed polyhedral representation~\eqref{eq:PolyhedralSet} and selection criterion~\eqref{eq:alpha+}--\eqref{eq:alpha-} remain applicable. 

Regarding convergence, the associated convergence analysis carries over to this class of nonlinear systems, provided that the regressor $\phi(z_k)$ satisfies the excitation condition analogous to that imposed in the linear case.
This yields the following corollary.

\begin{corollary}\label{corollary:nonlinear}
Under Assumption~\ref{as:disturbance},\ref{as:initialSet},\ref{as:CDD}. Assume $\{\phi(z_k)\}$ is $(\beta,N_u,b_z)$-PE (Definition~\ref{def:PE}). Then, the feasible set $\widehat{\Theta}_K$ constructed under selection threshold $\alpha_0 \in (-1,0)$ satisfies:
\begin{equation}\label{eq:nonlinearConvergence}
    \lim_{K \to \infty}\mu_{\infty}(\widehat{\Theta}_K) = 0.
\end{equation}
\end{corollary}

The proof of Corollary~\ref{corollary:nonlinear} follows the same line of argument as that of Theorem~\ref{Theorem:ThresholdTriggerConvergence} after replacing $z_k$ with $\phi(z_k)$ throughout, and is omitted for brevity.

\begin{remark}\label{remark:nonlinear}
    Although the foregoing development is presented for state equations, the representation and analysis can be extended to \emph{output regression} models.
    For instance, suppose each output component satisfies $y_k(i)=\psi(\xi_{k-1})^T \theta_*^i+\nu_{k-1}(i)$, where $\nu_{k-1}(i)$ is bounded, $\psi(\cdot)$ is a known regressor, and $\theta_*^i$ is an unknown parameter vector.
    Then the polyhedral representation and the selection criteria follow identically after replacing $x_k(i)$ and $z_{k-1}$ with $y_k(i)$ and $\psi(\xi_{k-1})$, respectively.
    Under the corresponding excitation condition on $\psi(\xi_k)$, worst-case convergence of the feasible set under data selection is guaranteed in the same sense as in Theorem~\ref{Theorem:ThresholdTriggerConvergence} and Corollary~\ref{corollary:nonlinear}.
\end{remark}

\end{revBlockE}

\subsection{Extension to Noisy Measurement}
\label{subsubsection:nonlinear}
\label{subsubsection:noisy}
Consider noisy regressor measurements $\tilde{z}_k \triangleq \begin{bmatrix}\tilde{x}_k^T & \tilde{u}_k^T\end{bmatrix}^T \in \mathbb{R}^{n_z}$, where $\tilde{z}_k=z_k + v_k$ and $v_k \in \mathcal{V}\triangleq \{v: |v|\leq \bar{v}\}$ denotes bounded measurement noise.
From (\ref{eq:system}), we have $\tilde{x}_{k}(i) - v_{k}(i) =  (\tilde{z}_{k-1} - v_{k-1})^T\theta_*^i + w_{k-1}(i)$.
Let $\hat{\zeta}_{k-1}(i) \triangleq \sum_{j=1}^{n_z}\bar{v}(j)
\sup_{\theta \in P_{k-1}^i}|\theta(j)|$,
which serves as a computable worst-case upper bound on the unknown noise-induced term
$\zeta_{k-1}(i) \triangleq v_{k-1}^\top \theta_*^i$.
Based on this, we present the following result regarding the feasibility of the true parameter.
\begin{proposition}\label{prop:noisy_robustness}
    Given that $\theta_*^i \in P_{k-1}^i$, the true parameter $\theta_*^i$ satisfies the modified constraint defined by the set $\mathcal{S}_k^i$:
    \begin{equation}
        \label{eq:noisySME}
        \mathcal{S}_k^i = \left\{\theta: 
        \left| \tilde{z}_{k-1}^\top \theta - \tilde{x}_{k}(i) \right|
        \leq
        \bar{w}(i) + \bar{v}(i) + \hat{\zeta}_{k-1}(i)
        \right\}.
    \end{equation}
\end{proposition}

\begin{proof}
    By rearranging the system dynamics, we obtain: $\tilde{z}_{k-1}^\top \theta_*^i - \tilde{x}_{k}(i) = v_{k-1}^\top \theta_*^i - w_{k-1}(i) - v_k(i)$.
    Applying the triangle inequality and the bounds on $w$ and $v$ yields: $\left| \tilde{z}_{k-1}^T \theta_*^i - \tilde{x}_{k}(i) \right| \leq \left|v_{k-1}^T \theta_*^i\right| + \bar{w}(i) + \bar{v}(i)$.
    Since $\theta_*^i \in P_{k-1}^i$, the term $\left|v_{k-1}^\top \theta_*^i\right|$ is bounded by $\hat{\zeta}_{k-1}(i)$ by definition. Consequently, $\theta_*^i \in \mathcal{S}_k^i$.
\end{proof}

Proposition~\ref{prop:noisy_robustness} implies that by replacing the fixed bound $\bar{w}(i)$ with the time-varying bound $\bar{w}(i) + \bar{v}(i) + \hat{\zeta}_{k-1}$, the linear inequality form is preserved. 
Consequently, the proposed polyhedral representation~\eqref{eq:PolyhedralSet} and triggering mechanisms~\eqref{eq:alpha+}--\eqref{eq:alpha-} remain directly applicable.

Regarding convergence, define the (unknown) tight bound on $|v_{k-1}^\top\theta_*^i|$ as $\bar{\zeta}(i)\triangleq \langle \bar v,|\theta_*^i|\rangle$. Let $\Delta \theta_0^i \in \mathbb{R}^{n_z}$ where $\Delta \theta_0^i(j) = \sup_{\theta \in \theta_0^i}\theta(j) - \inf_{\theta \in \theta_0^i}\theta(j)$. 
Then,  $0 \leq \hat{\zeta}_{k-1}(i) - \bar{\zeta}(i) \leq \langle \bar{v}, \Delta \theta_0^i \rangle \triangleq \Delta \zeta(i)$. 
The existence of $\Delta \zeta > 0$ plays the same role as an unknown disturbance-bound mismatch (cf. Section~\ref{section:BoundMismatch}). 
Then, if noisy regressor satisfies similar PE condition in Assumption~\ref{as:PE} and $v_{k-1}^\top \theta_*^i - w_{k-1}(i) - v_k(i)$ has the two-sided boundary property in Assumption~\ref{as:CDD}, the convergence analysis in Section~\ref{section:convergenceAnalysis} can be adapted, giving the following result:

\begin{corollary}\label{corollary:noisyMeasurement}
Consider system~\eqref{eq:system} under noisy regressor measurements $\tilde z_k=z_k+v_k$.
Suppose Assumptions~\ref{as:disturbance},\ref{as:initialSet} hold, and additionally:
(i) the noisy regressor $\{\tilde z_k\}$ is $(\tilde \beta,N_u,\tilde b_z)$-PE; and
(ii) for each $i$, the composite disturbance $\tilde w_{k-1}(i)\triangleq v_{k-1}^\top\theta_*^i-w_{k-1}(i)-v_k(i)$ is tight in the sense of Definition~\ref{def:tightness} with bound $\bar{\zeta}(i) + \bar{w}(i) + \bar{v}(i)$.
Then, under threshold $\alpha_0\in(-1,0)$ and conditions in Proposition~\ref{proposition:upperbound_H}, the feasible set $P_k^i$ constructed using \eqref{eq:noisySME} satisfies:
\begin{equation}\label{eq:noisyMeasurement}
P_\infty^i \subseteq  \theta_*^i \oplus \mathcal{B}\!\left(\mathbf{0},\; \frac{\Delta\zeta(i)}{\tilde\beta}
+ 2\frac{\sqrt{n_z}}{\sigma_\star} \tilde b_z\, R_0^i(\alpha_0+1)
\right).
\end{equation}
\end{corollary}

\begin{proof}
Under \eqref{eq:noisySME}, the algorithm uses the bound $\hat w_{k-1}(i)=\bar w(i)+\bar v(i)+\hat\zeta_{k-1}(i)$, whereas the (unknown) tight bound in (ii) equals $\bar w(i)+\bar v(i)+\bar{\zeta}(i)$ with $\bar{\zeta}(i)=\langle\bar v,|\theta_*^i|\rangle$. 
Since $0\le \hat\zeta_{k-1}(i)-\bar{\zeta}(i)\le \Delta\zeta(i)$, the setting reduces to the bound-mismatch case of Corollary~\ref{corollary:unknownBoundAsymptoticExpression} with $(\beta,b_z)$ replaced by $(\tilde\beta,\tilde b_z)$, yielding \eqref{eq:noisyMeasurement}.
\end{proof}

\begin{remark}\label{rem:noisy_conservatism}
Eq.\eqref{eq:noisyMeasurement} highlights two sources of conservatism in estimation error.
The first term $\Delta\zeta(i)/\tilde\beta$ is induced by the regressor measurement noise and originates from replacing the unknown tight bound $\bar{\zeta}(i)=\langle\bar v,|\theta_*^i|\rangle$ with the worst-case bound $\hat\zeta_{k-1}(i)$; this replacement yields an effective bound mismatch uniformly bounded by $\Delta\zeta(i)$.
The second term $2\frac{\sqrt{n_z}}{\sigma_\star}\tilde b_z R_0^i(\alpha_0+1)$ is the coreset-selection relaxation characterized in Section~\ref{section:BoundMismatch}, and it vanishes in the full-data limit $\alpha_0\to -1$.
Finally, when $\bar v\to 0$, we have $\Delta\zeta(i)\to 0$ and the bound reduces to the noiseless-regressor counterpart.
\end{remark}
\end{revBlock}

\begin{revBlock}
\subsection{Computational Complexity}
\label{section:computationalComplexity}
The total computational cost over the horizon $K$ for the $i$-th polyhedron is given by $\mathcal{C}(K)
    = \sum_{k=1}^K \Cselect(k)
    + \sum_{k=1}^K \mathbf{1}_{\{\gamma_{k,i}=1\}} \Cupdate(k)$,
where $\Cselect(k)$ represents the cost of checking triggering conditions, and $\Cupdate(k)$ includes the cost of redundancy removal and centroid maintenance.
Let $\mathcal{C}_{\text{LP}}(m, n_z)$ denote the polynomial time cost of solving a linear program (LP) with $m$ constraints and $n_z$ variables~\cite{boyd2004convex}.
Evaluating $\Cselect(k)$ requires solving two LPs, i.e., $\Cselect(k) = \mathcal{O}(\mathcal{C}_{\text{LP}}(m, n_z))$.
When an update is triggered ($\gamma_{k,i}=1$), LP-based redundancy removal requires solving at most $m$ LPs, yielding $\Cupdate(k) = \mathcal{O}(m\, \mathcal{C}_{\text{LP}}(m, n_z))$, which is typically larger than the linear cost of centroid maintenance~\cite{kannan2009random}.

The above analysis implies the following:
(i) When updates occur frequently (i.e., $\mathbb{P}(\gamma_{k,i}=1)>\mathcal{O}(1/m)$), the computational bottleneck is the update operations triggered by $\gamma_{k,i}=1$. 
By \eqref{eq:niK_upperbound_Q} in Theorem~\ref{theorem:expected_bounds}, the expected cumulative number of updates $\mathbb{E}\left[\sum_{k=1}^K \mathbf{1}_{\{\gamma_{k,i}=1\}}\right] = \mathbb{E}\left[ n^i(K) \right]$ grows logarithmically with $K$ and is non-increasing in $\alpha_0 \in (-1,0)$. 
Therefore, increasing $\alpha_0$ mitigates the frequency of the computationally expensive $\Cupdate(k)$ operations and thereby lowers the overall computational burden.
(ii) In the long run, the logarithmic growth of $m$ (as implied by $m \le m_0 + 2 n^i(K)$ together with~\eqref{eq:niK_upperbound_Q}) implies that the total complexity becomes dominated by the selection cost, and the per step complexity scales as $\mathcal{O}(n_x \mathcal{C}_{\text{LP}}(\ln K, n_z))$.
This logarithmic characterization guarantees the long-term tractability of the proposed framework. Moreover, since the dependence on the problem dimension enters only through the LP complexity, the proposed method remains applicable to moderate-to-high dimensional systems for which LPs of size $(m,n_z)=(\mathcal{O}(\ln K),n_z)$ can be solved within the available computational budget.

\begin{remark}[Accuracy--Complexity Trade-off]\label{remark:tradeoff_alpha0}
The threshold $\alpha_0$ is a tunable parameter governing the trade-off between identification accuracy and computational complexity.
On the accuracy side, a larger $\alpha_0$ enforces a stricter coreset selection and thus increases the coreset-induced conservativeness, as quantified by Theorem~\ref{theorem:dist_alpha0}.
On the complexity side, fewer selections lead to fewer polyhedral updates; correspondingly, Theorem~\ref{theorem:expected_bounds} shows that the expected number of updates decreases with $\alpha_0$.
Consequently, $\alpha_0$ provides a principled accuracy--complexity trade-off: choose a larger $\alpha_0$ to meet a real-time budget, and a smaller $\alpha_0$ when faster finite-time uncertainty reduction is prioritized.
\end{remark}

\begin{remark}\label{remark:reduce_complexity}
    In addition to tuning $\alpha_0$, we can further reduce the computational burden in real-time deployments by adopting several strategies.
    First, the update process need not be strictly sequential; selected data points can be buffered to perform redundancy removal and centroid updates in batches.
    Second, since the set-membership updates for each state component $x(i)$ are decoupled, the algorithm admits a parallel implementation, which can reduce the per-step execution time.
\end{remark}
\end{revBlock}

\begin{revBlock}
\subsection{Comparison with Least-Squares Criterion}
\label{section:comparison_LS}

In the convergence analysis, Assumption~\ref{as:CDD} is used to capture the stochastic richness of the disturbance. Under stochastic noise, an alternative for identifying unknown parameters is least-squares identification (LSI). In this section, we briefly compare the proposed coreset selection-based SMI with LSI, highlighting key advantages and the primary limitation.

Three main advantages are as follows:
First, the SMI estimator $\Theta_{\mathcal{S}}$ yields a \textit{deterministic feasible parameter set} rather than a point estimate with probabilistic confidence bounds as in LSI. 
Such hard guarantees on the parameter set are essential for safety-critical applications like robust model predictive control~\cite{lorenzen2019robust} and reachability analysis \cite{alanwar2023data}. 
Second, our data selection mechanism \eqref{eq:subtrigger} induces a geometric contraction of the feasible-set measure and ensures \textit{worst-case convergence} (Lemma \ref{lemma:GrunbaumInequality},  Theorem~\ref{Theorem:ThresholdTriggerConvergence}), whereas data selection in LSI typically optimizes average-case prediction accuracy~\cite{chatterjee1986influential}. This worst-case convergence is particularly beneficial for robust synthesis, where reducing the maximum admissible uncertainty is often the primary bottleneck~\cite{bemporad2007robust}.
Finally, the SMI framework remains theoretically tractable when the noise is not well characterized (e.g., systematic bias or bound mismatch). 
In particular, the proposed method provides a deterministic characterization of the uncertainty set even when the exact bounds are unknown (Theorem~\ref{theorem:dist_alpha0}). By comparison, least-squares methods typically handle these effects via robustified variants that assume a specific uncertainty model and a prescribed robustness level, adding extra assumptions and tuning beyond standard LSI~\cite{el1997robust}.

The primary drawback of SMI is its higher computational complexity relative to the fixed cost of LSI.
These limitations are mitigated by the proposed coreset selection mechanism, which reduces computational complexity (Section~\ref{section:computationalComplexity}) with convergence guarantees (Section~\ref{section:convergenceAnalysis}), thereby rendering the SMI-based framework more practical for online implementation.
\end{revBlock}

\section{Simulations}
\label{section:simulation}
In this section we provide simulation results to validate the theoretical results and show the effectiveness of the proposed online coreset selection strategy.
\rev{The following simulations are performed utilizing a laptop with Intel Core i5-11300H CPU and 16 GB of RAM, and the optimization problems are solved by MATLAB function \texttt{linprog()}.}
We first present an illustrative example on a second-order system to visualize the uncertainty evolution and selection behavior. 

\subsection{Illustrative Example: Second-Order System}
We begin by illustrating the proposed data selection method on a second-order discrete-time linear system of the form~\eqref{eq:system} with system matrices\begin{equation*}
    \label{eq:SimSys_N2}
    A = \begin{bmatrix}
        0.5366 & 0.2038 \\
        -0.0406 & 0.6310
    \end{bmatrix}, \quad
    B = \begin{bmatrix}
        -0.02 \\
        0.4730
    \end{bmatrix},
\end{equation*}
as used in~\cite{lorenzen2019robust}. The disturbance sequence $w_k$ is i.i.d. and uniformly distributed over the ellipsoid set $\mathcal{W} = \{w \in \mathbb{R}^2: \|w\|_2 \leq 0.5\}$. The trigger threshold is set as $\alpha_0 = -0.3$. 
\rev{The input sequence $u_k$ is generated i.i.d. from  $\mathcal{N}(0,5)$, and the simulation length is $K = 150$ steps.} 
We choose the initial feasible set of $\theta_0^i(i \in \{1,2\})$ as a unit cube $\{x \in \mathbb{R}^3: \Vert x \Vert_{\infty} \leq 1\}$. 
Then it has $\mathcal{H}$-representation $P_{\mathcal{H}}([I;-I],\mathbf{1}_{6})$.

\subsubsection{Verification of PE}

Fig.~\ref{fig:PE} represents the smallest eigenvalue of expression $\lambda_{\min} = (\sum z_k z_k^T)/N_u$ over sliding windows. 
Specifically, each blue marker represents the minimum eigenvalue of the matrix $(\sum_{k=k_0-N_u+1}^{k_0} z_k z_k^T)/N_u$ at time step $k_0$, while red dashed line denotes the constant threshold $\beta^2$. 
As observed, all eigenvalues remain above the threshold, thereby showing that the Assumption~\ref{as:PE} is satisfied.

\begin{figure}[htbp]
    \centering
    \includegraphics[width=.46\textwidth]{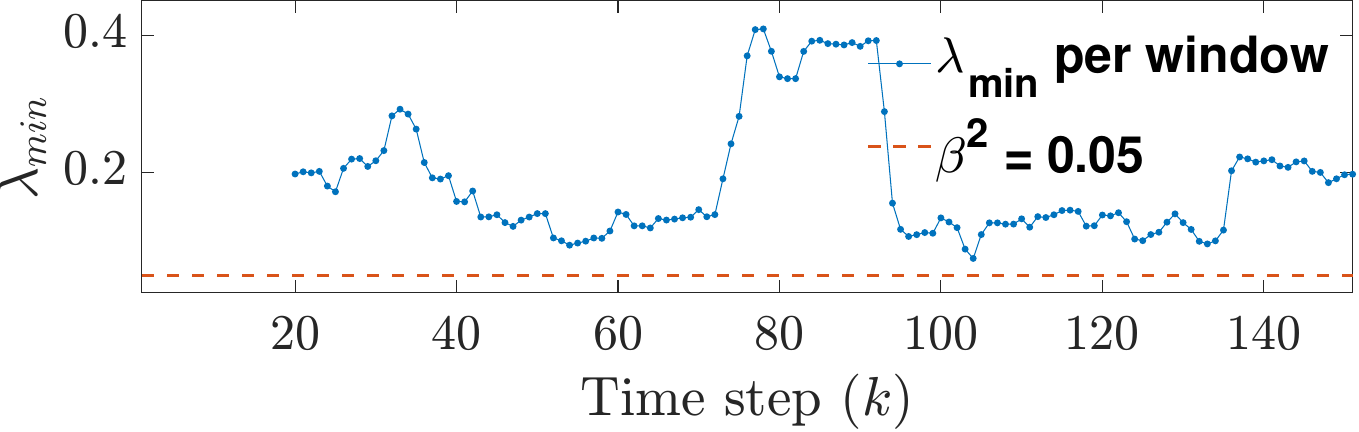}%
    \caption{Verification of PE condition in Assumption~\ref{as:PE}: minimum eigenvalue of the covariance matrix over sliding windows of length $N_u=20$.}
    \label{fig:PE}
\end{figure}

\subsubsection{Data Selection and Uncertainty Reduction}
To illustrate the effect of threshold-triggered data selection, we simulate the evolution of the feasible set over time. 
The result is shown in Fig.~\ref{fig:N2_performance} and Fig.~\ref{fig:N2_3D_evolution}.

Fig.~\ref{fig:N2_performance} shows the evolution of the worst-case  volume $\mu_{\infty}(\widehat{\Theta}_k)$ and cumulative number of selected data points $\mathcal{T}(k)$. 
From Fig.~\ref{fig:N2_performance}(a), we see that the volume of feasible set decreases steadily as more informative data is selected. 
From Fig.~\ref{fig:N2_performance}(b), we see that data selection is triggered only 25 times over the entire 150-step horizon.
Together, Figs.~\ref{fig:N2_performance}(a) and (b) illustrate a representative execution of the proposed online coreset selection method.

\begin{figure}[htbp]
    \centering
    \includegraphics[width=.43\textwidth]{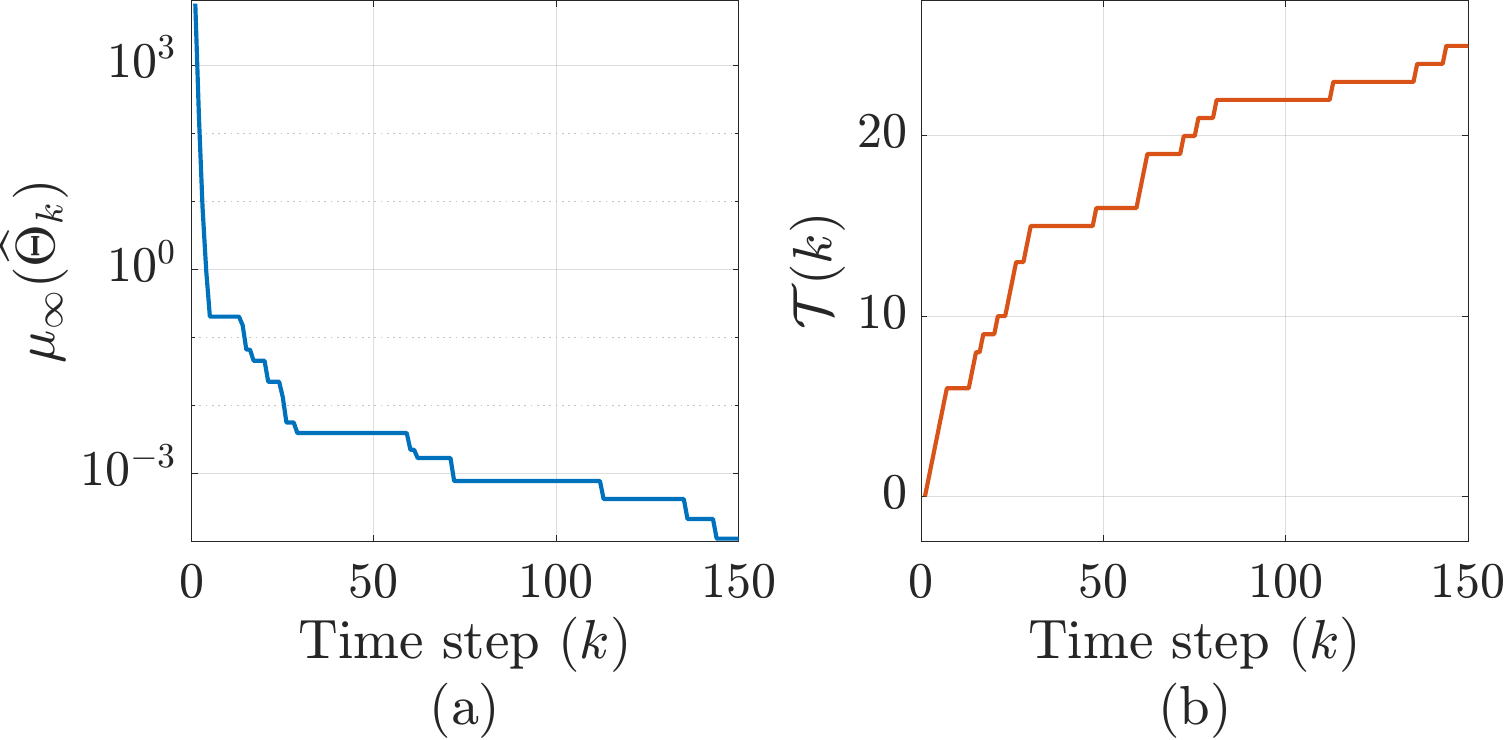}%
    \caption{Performance of the threshold-triggered estimator ($\alpha_0 = -0.3$). (a) Evolution of the worst-case volume $\mu_{\infty}(\widehat{\Theta}_k)$. (b) Cumulative number of selected data points.}
    \label{fig:N2_performance}
\end{figure}

Fig.~\ref{fig:N2_3D_evolution} visualizes the uncertainty contraction process. 
The polyhedral set $P_k^1$ and $P_k^2$ are shown at time steps $k=0$, $k=20$, and $k=80$. 
The true parameter values are marked by red stars. 
It is observed that the volume of each 3D polytope reduces significantly over time, and the sets progressively contract toward the true parameter values, validating the effectiveness of the proposed data selection strategy in selecting informative data.
\begin{figure}[htbp]
    \centering
    \includegraphics[width=.5\textwidth]{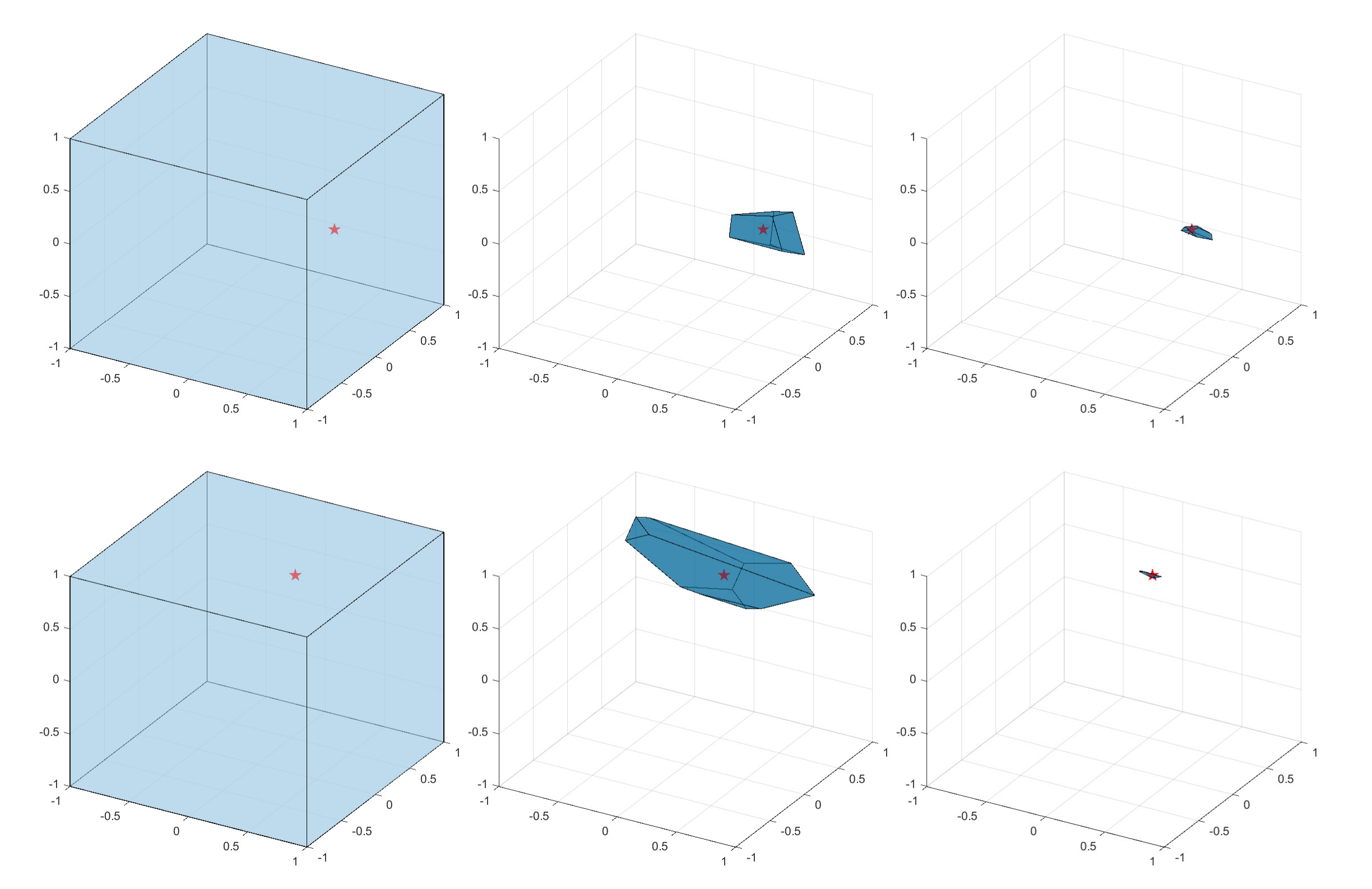}%
    \caption{Evolution of feasible parameter sets. The top and bottom rows depict the 3D polyhedral uncertainty sets $P^1_k$ and $P^2_k$ at $k = 0$, $k = 20$, and $k = 80$, respectively. Red stars mark the true parameter values.}
    \label{fig:N2_3D_evolution}
\end{figure}

\subsection{Performance Under Different Selection Threshold $\alpha_0$}

In this section we simulate the performance of the proposed coreset selection scheme over a grid of selection threshold $\alpha_0 \in \{-1,-0.8,-0.6,-0.5,-0.4,-0.3,-0.2,-0.1,0\}$.
We consider the linearized discrete-time system of Boeing 747~\cite{li2024learning} of the form (\ref{eq:system}) with the following system matrices:
\begin{equation*}
    \label{eq:simulation}
    A = \begin{bmatrix}
            0.99&0.03&-0.02&-0.32\\ 
            0.01&0.47&4.7&0 \\ 
            0.02 & -0.06 & 0.4 & 0 \\ 
            0.01 & -0.04 & 0.72 & 0.99
        \end{bmatrix}\!,\!
    B = \begin{bmatrix}
            0.01 & 0.99 \\ 
            -3.44 & 1.66 \\ 
            -0.83 & 0.44 \\ 
            -0.47 & 0.25
        \end{bmatrix}\!.
\end{equation*}
The disturbance sequence $w_k$ is i.i.d. and uniformly distributed over the hypercube $\mathcal{W} = \{w \in \mathbb{R}^4: \|w\|_{\infty} \leq 2\}$. 
\rev{The input sequence $u_k$ is generated i.i.d. from  $\mathcal{N}(\mathbf{0} ,I_2)$. The simulation horizon is set to $K=500$.}
The initial feasible set of $\theta_0^i (i \in \{1,2,3,4\})$ are chosen as a hypercube $\{x \in \mathbb{R}^6: \Vert x \Vert_{\infty} \leq 10\}$, with $\mathcal{H}$-representation $P_{\mathcal{H}}([I;-I],10\times\mathbf{1}_{12})$.
For each $\alpha_0$, we performed $N_{epoch}=100$ Monte Carlo trials with randomly generated regressor sequences ${z_k}$ satisfying the PE condition (Assumption~\ref{as:PE} with $N_u = 20$ and $\beta^2 = 0.02$).

\begin{revBlock}
    Fig.~\ref{fig:N4_alpha0}(a) and Fig.~\ref{fig:N4_alpha0}(b) illustrate the evolution of the feasible set volume and the cumulative number of selected data points, respectively. 
    For clarity, we display the results for a representative subset of $\alpha_0$ values, i.e., $\alpha_0 \in \{-1.0, -0.5, -0.3, -0.2, 0.0\}$. 
    As shown in Fig.~\ref{fig:N4_alpha0}(a), the average worst-case volume $\mu_{\infty}(\widehat{\Theta}_k)$ converges to zero for all $\alpha_0$, validating the   theoretical convergence guarenteed by Theorem~\ref{Theorem:ThresholdTriggerConvergence}. 
    Simultaneously, the cumulative number of selected data grows at a rate of $O(\ln K)$, \rev{which is}  consistent with Theorem~\ref{theorem:expected_bounds}. 
    Notably, a larger $\alpha_0$ (closer to 0) results in fewer selected points and less computationally expensive polyhedral updates, albeit at the cost of increased conservatism \revE{in the feasible set approximation}.
\begin{figure}[htbp]
    \centering
    \includegraphics[width=.49\textwidth]{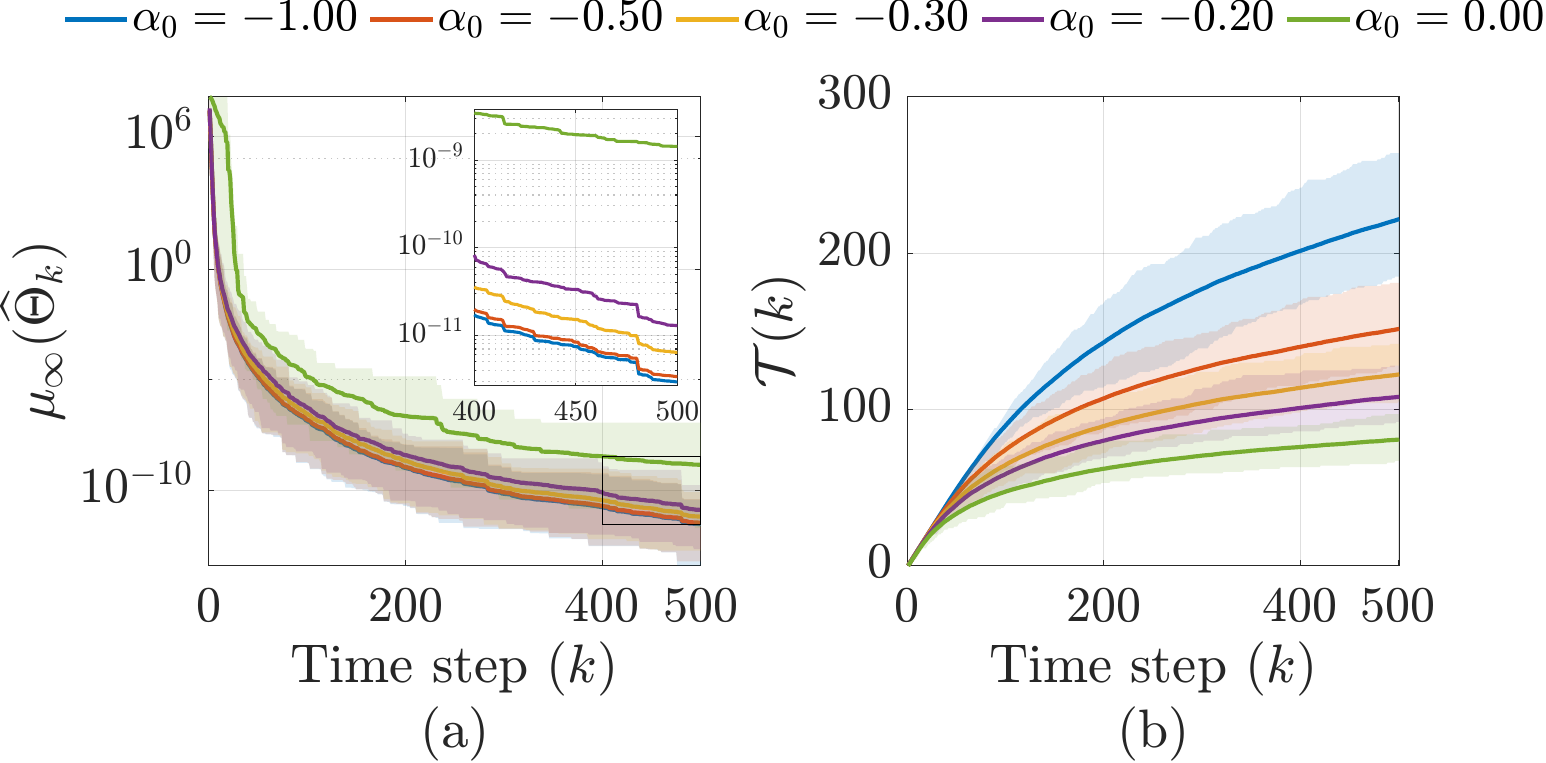}%
    \caption{Comparison of performance under different threshold parameter $\alpha_0 \in \{-1,-0.5,-0.3,-0.2,0\}$. (a) Evolution of feasible set worst-case volume $\mu_{\infty}(\widehat{\Theta}_k)$. (b) Cumulative number of selected data points under different $\alpha_0$ values. The shaded regions represent the range between the minimum and maximum values.}
    \label{fig:N4_alpha0}
\end{figure}

    To further illustrate this trade-off, Fig.~\ref{fig:tradeoff} plots the final worst-case volume $\mu_{\infty}(\widehat{\Theta}_{500})$ (left axis) and the average simulation time (right axis) against $\alpha_0$.
    This dual-axis \revE{figure} clearly demonstrates that increasing $\alpha_0$  from -1 to 0 significantly reduces the computational burden while slightly increasing the uncertainty set volume.
    Consequently, $\alpha_0$ acts as a tuning parameter to balance estimation accuracy and computational \revE{cost}.
\begin{figure}[htbp]
    \centering
    \includegraphics[width=.45\textwidth]{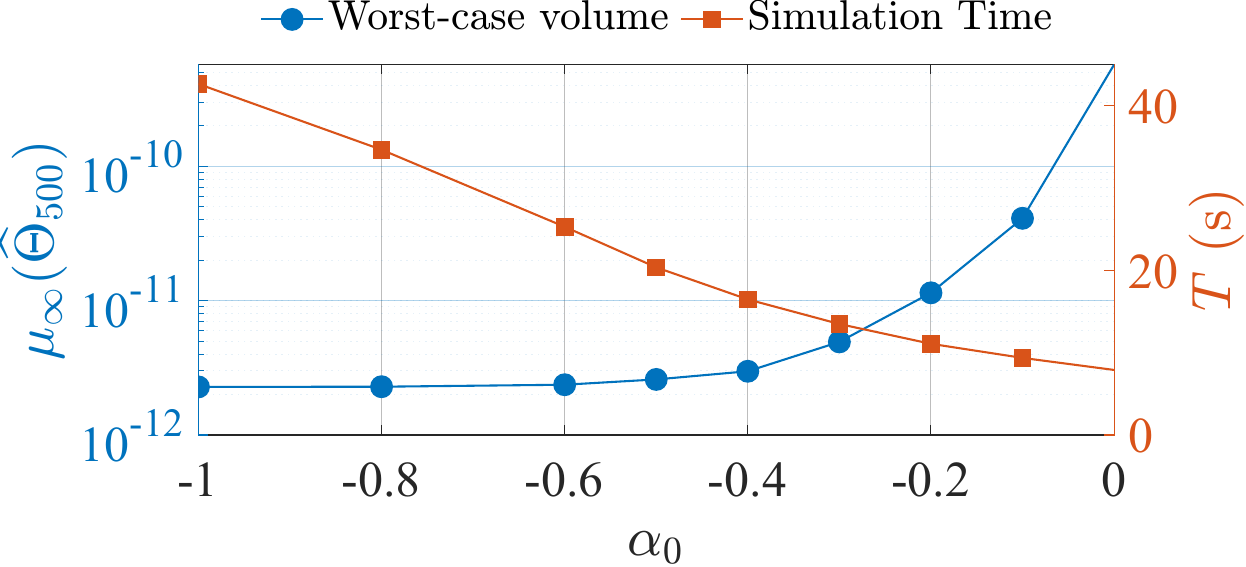}%
    \caption{Trade-off between estimation accuracy and computational efficiency with respect to $\alpha_0$. The blue curve represents the final worst-case volume $\mu_{\infty}(\widehat{\Theta}_{500})$, while the orange curve depicts the average simulation time $T$.}
    \label{fig:tradeoff}
\end{figure}

\end{revBlock}

\section{Conclusion and Future Directions}
\label{section:conclusions}
In this work, We propose an online coreset selection method for set-membership identification with guaranteed convergence. 
By deriving a stacked polyhedral over-approximation of the feasible parameter set, we establish volume convergence under persistently exciting data and tight disturbance bounds via a generalized Grünbaum-type inequality.
We further characterize the feasible set under disturbance-bound mismatch and provide an expected upper bound on the number of data points in coreset. 
We also discuss extensions to linear-in-the-parameters nonlinear dynamics and bounded measurement noise. 
Finally, we conduct numerical simulations and validate the theoretic results. 

\rev{Future work includes extending the framework to broader nonlinear dynamics e.g., via Koopman-based representations that yield an approximate linear parameter structure, together with a quantitative analysis of how approximation errors affect feasibility and convergence. 
Moreover, since the noisy-measurement extension in Section~\ref{subsubsection:noisy} preserves feasibility by relaxing the disturbance bound, deriving less conservative, data-dependent bounds that remain provably convergent under coreset selection is an important next step.}

\appendices

\section*{Acknowledgment}
\revE{The authors would like to thank the Associate Editor and the anonymous reviewers for their suggestions that have improved the quality of this work.}

\section*{References}
\bibliographystyle{IEEEtran}  
\bibliography{myref}

\begin{IEEEbiography}[{\includegraphics[width=1in,height=1.25in,clip,keepaspectratio]{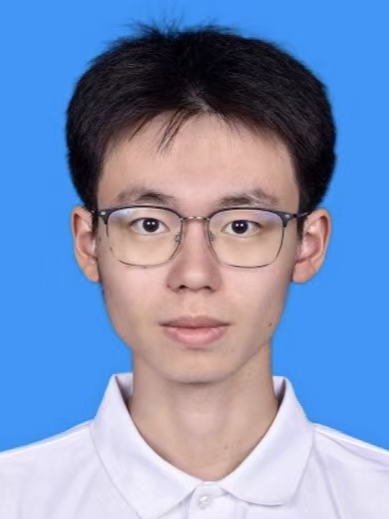}}]
    {Jingyuan Li} was born in Zhuzhou, Hunan, China. He received the B.Eng. degree in Automation from the Beijing Institute of Technology.

    He is taking successive postgraduate and doctoral programs at the School of Automation, Beijing Institute of Technology. His research interests include system identification, event-triggered learning and model predictive control.
\end{IEEEbiography}

\begin{IEEEbiography}[{\includegraphics[width=1in,height=1.25in,clip,keepaspectratio]{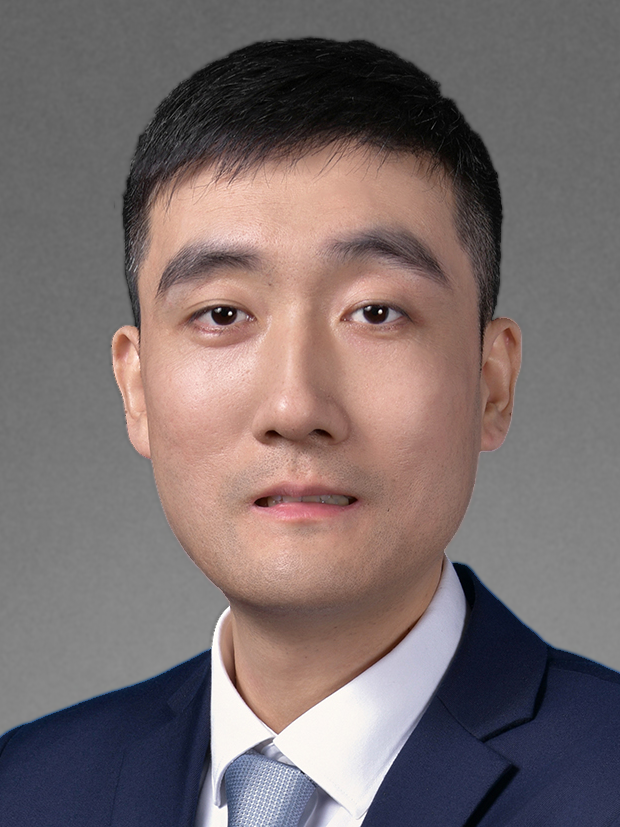}}]
{Dawei Shi} is presently a Professor with the School of Automation, Beijing Institute of Technology. He received the B.Eng. degree in Electrical Engineering and Automation from the Beijing Institute of Technology, Beijing, China, in 2008, and the Ph.D. degree in Control Systems from the University of Alberta, Edmonton, Alberta, Canada, in 2014. From 2017 to 2018, he was a Postdoctoral Fellow in Bioengineering at the Harvard University, Cambridge, MA, USA. 

Dr. Shi's research interests include the analysis and design of advanced sampled-data control systems, with applications to biomedical engineering, robotics, and motion systems. He has served as an Associate Editor/Technical Editor for several international journals, including Control Engineering Practice, IEEE/ASME Transactions on Mechatronics, IEEE Transactions on Industrial Electronics, and IEEE Control Systems Letters. He served as the General Chair of the IEEE Industrial Electronics Society Annual Online Conference in 2024 and 2025. He is a Senior Member of the IEEE and a Fellow of the IET.
\end{IEEEbiography}

\begin{IEEEbiography}[{\includegraphics[width=1in,height=1.25in,clip,keepaspectratio]{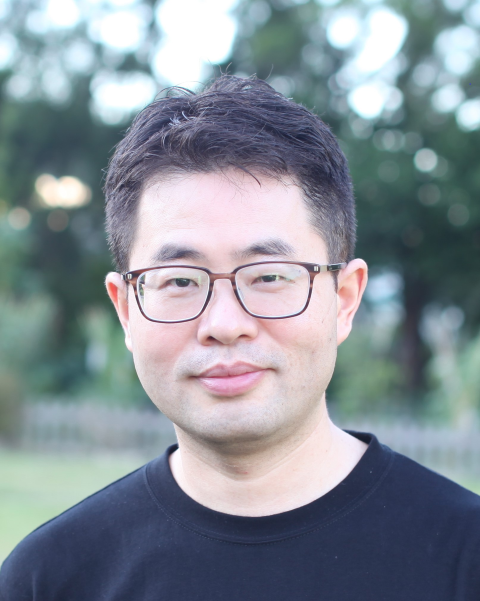}}]
{Ling Shi} received his B.E. degree in Electrical and Electronic Engineering from The Hong Kong University of Science and Technology~(HKUST) in 2002 and Ph.D. degree in Control and Dynamical Systems from The California Institute of Technology~(Caltech) in 2008. 
He is currently a Professor in the Department of Electronic and Computer Engineering at HKUST with a joint appointment in the Department of Chemical and Biological Engineering (2025-2028), and the Director of The Cheng Kar-Shun Robotics Institute~(CKSRI). 

His research interests include cyber-physical systems security, networked control systems, sensor scheduling, event-based state estimation, and multi-agent robotic systems (UAVs and UGVs). He served as an editorial board member for the European Control Conference 2013-2016. He was a subject editor for International Journal of Robust and Nonlinear Control (2015-2017), an associate editor for IEEE Transactions on Control of Network Systems (2016-2020), an associate editor for IEEE Control Systems Letters (2017-2020), and an associate editor for a special issue on Secure Control of Cyber Physical Systems in IEEE Transactions on Control of Network Systems (2015-2017). He also served as the General Chair of the 23rd International Symposium on Mathematical Theory of Networks and Systems (MTNS 2018). He is currently serving as a member of the Engineering Panel (Joint Research Schemes) of the Hong Kong Research Grants Council (RGC) (2023-2026). He received the 2024 Chen Han-Fu Award given by the Technical Committee on Control Theory, Chinese Association of Automation (TCCT, CAA). He is a member of the Young Scientists Class 2020 of the World Economic Forum (WEF), a member of The Hong Kong Young Academy of Sciences (YASHK), and he is an IEEE Fellow.
\end{IEEEbiography}

\end{document}